\documentclass{article}

\usepackage{arxiv}

\usepackage[utf8]{inputenc} 
\usepackage[T1]{fontenc}    
\usepackage{hyperref}       
\usepackage{url}            
\usepackage{booktabs}       
\usepackage{amsfonts}       
\usepackage{nicefrac}       
\usepackage{microtype}      
\usepackage{lipsum}
\usepackage{graphicx}
\usepackage{graphicx}
\usepackage{amsmath}
\usepackage{amsmath}
\usepackage{amssymb}
\usepackage{mathtools}
\numberwithin{equation}{section}
\usepackage{empheq}
\usepackage{physics}
\usepackage{color}
\usepackage[hang,small,bf]{caption}
\usepackage[subrefformat=parens]{subcaption}
\captionsetup{compatibility=false}

\def\bC{\mathbb{C}}

\def\bR{\mathbb{R}}

\def\bZ{\mathbb{Z}}

\def\sn{\mathrm{sn}\,}
\def\ns{\mathrm{ns}\,}
\def\dn{\mathrm{dn}\,}

\def\x{\boldsymbol{x}}

\renewcommand{\Re}{\operatorname{Re}}
\renewcommand{\Im}{\operatorname{Im}}

\def\ii{\mathrm{i}}
\def\e{\mathrm{e}}
\def\sgn{\mathrm{sgn}\,}
\let\hat\widehat



\def\Xint#1{\mathchoice
   {\XXint\displaystyle\textstyle{#1}}%
   {\XXint\textstyle\scriptstyle{#1}}%
   {\XXint\scriptstyle\scriptscriptstyle{#1}}%
   {\XXint\scriptscriptstyle\scriptscriptstyle{#1}}%
   \!\int}
\def\XXint#1#2#3{{\setbox0=\hbox{$#1{#2#3}{\int}$}
     \vcenter{\hbox{$#2#3$}}\kern-.5\wd0}}

\def\dashint{\Xint-}

\def\Llleftarrow{%
\lower2pt\hbox{\begingroup
\tikz
\draw[shorten >=0pt,shorten <=0pt] (0,3pt) -- ++(-1em,0) (0,1pt) -- ++(-1em-1pt,0) (0,-1pt) -- ++(-1em-1pt,0) (0,-3pt) -- ++(-1em,0) (-1em+1pt,5pt) to[out=-105,in=45] (-1em-2pt,0) to[out=-45,in=105] (-1em+1pt,-5pt);
\endgroup}
}
\usepackage{amsthm}

\theoremstyle{definition}
\newtheorem{dfn}{Definition}[section]
\newtheorem{prop}[dfn]{Proposition}
\newtheorem{lem}[dfn]{Lemma}

\newtheorem{cor}[dfn]{Corollary}
\newtheorem{rem}[dfn]{Remark}

\title{Non-local time evolution equation with singular integral and its application to traffic flow model}

\author{
 Kohei Higashi \\
  \texttt{koheih@ms.u-tokyo.ac.jp} \\
}

\begin{document}
\maketitle
\begin{abstract}
We consider an integro-differential equation model for traffic flow which is an extension of the Burgers equation model. 
To discuss the model, we first examine general settings for integrable integro-differential equations and find that they are obtained through a simple residue formula from integrable eqations in a complex domain. 
As demonstration of the efficiency of this approach, we list several integrable equations including a difference equation with double singular integral and an equation with elliptic singular integral.
Then, we discuss the traffic model with singular integral and show that the model exhibits interaction between free flow region and congested region depending on the parameter of non-locality.  
\end{abstract}


\section{Introduction}
In dynamics of natural or social phenomena, non-local effect often plays an important role and has to be represented adequately for mathematical modeling. Integro-differential equations are an effective tool for this purpose and have been used in fields such as fluid mechanics, electric circuits, and epidemiology. 
A notable example in fluid dynamics is the intermediate long wave (ILW) equation which describes long internal gravity waves in a stratified fluid with finite depth \cite{Kubota}\cite{RIJoseph_1977}:
\begin{align}
\begin{cases}
    &\displaystyle\pdv{u(x,t)}{t}+\frac{1}{\delta}\pdv{u(x,t)}{x}+2u(x,t)\pdv{u(x,t)}{x} +   T\left(\pdv[2]{u(x,t)}{x}\right)= 0,\\[3pt]
    &\displaystyle T\left(v(x,t))\right):= \frac{-1}{2\delta}\dashint_{-\infty}^\infty \coth{\frac{\pi}{2\delta}(x-x')}v(x',t)\,dx'. \label{eq:ILW1}
\end{cases}
\end{align}
An important property of ILW eq. \eqref{eq:ILW1} is that it has $N$ soliton solutions, infinite number of conserved quantities, B\"{a}cklund transformation, and that initial value problems can be solved by inverse scattering transform (IST)\cite{RIJoseph_1977}\cite{Satsuma}\cite{Kodama1982}.  
Namely, it is a nonlinear integrable equation. 
In fact, by taking the limit $\delta \to 0$, \eqref{eq:ILW1} turns to the celebrated Korteweg deVries (KdV) equation and, by $\delta \to +\infty$, it turns to Benjamin-Ono equation\cite{Benjamin1967}\cite{Ono1975}.  

To obtain the solutions and conserved quantities for \eqref{eq:ILW1}, IST schemes were used and its associated spectral problems turned out to be the Riemann-Hilbert (RH) boundary value problem\cite{Kodama1982}.
Other RH problems have been introduced and integro-differential equations related to well-known integrable equations such as non-linear Schr\"{o}dinger equation\cite{ZakharovShabat1972}, Modified Korteweg-deVries equation\cite{Wadachi1972}, sine-Gordon equation\cite{Hirota1971} and Kadomtsev-Petiviashuvili equation\cite{KadomtsevPetviashvili1970} have been constructed\cite{DegasperisSantini1983}. 
The method is to obtain solutions of a RH boundary value problem, $\psi(z)$ ($z \in \mathbb{C}$) which satisfy a certain constraint such as $\psi(x)=-\psi (x+2\ii\delta)$ $(x \in \mathbb{R})$, and consider evolution equations which preserve the constraint.  
Most of these integrable integro-differential equations are expressed with singular integral as in \eqref{eq:ILW1}.

This approach with RH problems can be extended to an infinite series of integrable integro-differential equations, that is, integro-differential hierarchies of ILW, Sine-Gordon and AKNS equations \cite{DegasperisSantini1983}\cite{DegasperisSantiniAblowitz1985}\cite{SantiniAblowitzFokas1987}.
Another approach to construct the integro-differential analogue of ILW hierarchy and that of Intermediate nonlinear Schr\"{o}dinger equation hierarchy was proposed \cite{TutiyaSatsuma2003} on the basis of the theory of KP hierarchy\cite{Sato}. 
In Ref.~\cite{TutiyaSatsuma2003}, additional discrete flow in applied to the KP hierarchy
and the compatibility condition of the two flows are proved to give these integro-differential hierarchies.

Recently, Satsuma and Tomoeda proposed an integro-differential equation which describes time evolution of traffic density $\rho(x,t)$ as
\begin{align}\label{eq:Satsuma-Tomoeda}
\frac{\partial\rho}{\partial t}=V_{\max}\left(\dashint_{-\infty}^\infty \coth \frac{\pi}{2 \delta}(x-y)\frac{\partial \rho}{\partial y}(y)dy +I-\rho_{\max}\right)\frac{\partial \rho}{\partial x}+D\frac{\partial^2 \rho}{\partial x^2}.
\end{align}
Here $V_{\max}$ is the maximum velocity of a car, $\rho_{\max}$ is the density of the deadlock phenomenon, $\rho(\pm\infty,t)$ are assumed to be constant in time $t$ and $I:=\rho(\infty,t)+\rho(-\infty,t)$.
Equation \eqref{eq:Satsuma-Tomoeda} is an extension of the equation:
\begin{align}\label{eq:traffic_Burgers}
\frac{\partial\rho}{\partial t}=-V_{\max}\left(1-2\frac{\rho}{\rho_{\max}}\right)\frac{\partial \rho}{\partial x}+D\frac{\partial^2 \rho}{\partial x^2}  
\end{align}
which is a Burgers equation for a mathematical model of traffic flow based on fluid dynamics\cite{Lighthill-Whitham}\cite{Whitham}
In fact,
\begin{align*}
    \lim_{\delta \to +0} \dashint_{-\infty}^\infty \coth \frac{\pi}{2\delta}(x-y)\frac{\partial \rho}{\partial y}(y)dy
    &=\dashint_{-\infty}^\infty \sgn(x-y) \frac{\partial \rho}{\partial y}(y)dy\\
    &=2\rho(x)-\rho(\infty)-\rho(-\infty),
\end{align*}
and \eqref{eq:Satsuma-Tomoeda} turns to \eqref{eq:traffic_Burgers}.
Equation \eqref{eq:Satsuma-Tomoeda} was first investigated in Ref.\cite{Satsuma-Mimura} in an equivalent form as
\begin{align*}
    \frac{\partial u}{\partial t}=-\frac{\partial}{\partial x}\left(T(u)\cdot u \right)+D\pdv[2]{u}{x},
\end{align*}
where one soliton solution has been obtained by Hirota's bilinear method and extension to periodic solutions have been discussed.

Motivated by the application of the singular integral equation to traffic flow problems, this paper firstly considers integrable integro-differential equations and their general solutions as reductions from the well-established equations.
It is based on analytic properties of the solutions in a complex domain using fundamental residue formula. 
Although this approach is essentially equivalent to that with the RH problem, we need not consider the compatibility condition between the evolution equation and the constraint on the solutions appearing in the RH problem, because the solution of integrable differential or difference equations in the complex domain can be obtained directly, and it turns into a solution in the range of the real axis of the corresponding intego-differential equation.
As a natural extension of our approach, we show that the singular integral $T\left(v(x,t))\right)$ can be generalized to that with elliptic function.
Then, we apply the method to the traffic flow model \eqref{eq:Satsuma-Tomoeda} and discuss the interaction between free-flow and congested regions which depend on the parameter of non-locality.
The structure of this paper is as follows:
First, in Section~\ref{sect:2_residue}, we discuss the integral representation of holomorphic functions and their relationship with boundary values. Using this correspondence, we clarify the connection between systems in a domain and those on the boundary, leading us to naturally observe the emergence of singular integrals. In Section~\ref{sect:Generalization}, we tackle specific examples. We extend representative integrable systems, such as the KdV equation and the Toda equation, to equations that incorporate singular integrals, then, using these approach, we examine the traffic flow model in Section~\ref{sect:traffic_flow}. In the final section, we offer concluding remarks.

\section{Singular integral and residue formula}\label{sect:2_residue}
In this section, we apply residue formula to establish the relation between singular integral and difference operator for an analytic complex functions.
Hereafter we often represent partial derivatives using subscripts and omit the dependent variables $(x,t)$. 
For instance, \(\frac{\partial^2 u}{\partial x^2}(x,t)\) is denoted as \(u_{2x}\).
As an example, let us consider the integro-differential equation 
\begin{align}
&u_t+u_{3x}-6T(u_x)u_x = 0,
\label{eq:singular-KdV}
\end{align}
where \( T \) denotes a singular integral operator, 
and it acts on a function \( v \) as 
\begin{align}\label{def:singular_T}
Tv (x):= \frac{-1}{2\delta}\dashint_{-\infty}^\infty \coth{\frac{\pi}{2\delta}(x-x')}v(x')\,dx'.
\end{align}
This equation is an extension of a nonlinear partial differential equation with a dispersion term to have nonlinear non-local effects, and 
can be obtained through Riemann-Hilbert problem related to potential KdV equation \cite{Santini_review}. 
To show effectiveness of our approach, though it is simple, we examine \eqref{eq:singular-KdV} in some detail.
\subsection{Equations obtained in the limit of the parameter $\delta$}
We show that Eq.~\eqref{eq:singular-KdV} reduces to the KdV equation in the limit $\delta \rightarrow +0$.
Let us rescale $u$ as \(\Tilde{u}(x,t):=u(x,t)/\delta \). 
We denote \( \Tilde{u} \) as \( u \) again, and impose   
the boundary condition $u\,\rightarrow \, 0 \quad \text{as} \quad |x| \rightarrow \infty$.
By taking the limit \( \delta \rightarrow +0 \),  
the nonlinear term of \eqref{eq:singular-KdV} becomes 
\begin{align*}
    -6\delta(Tu_x)u_x &\rightarrow 3\left(\int_{-\infty}^x u_x(x')\,dx' - \int_{x}^\infty u(x')_x\,dx'\right)u_x \\
    & = 3(u(x) - u(-\infty) - u(\infty) + u(x))u_x \\
    & = 6uu_x,
\end{align*}
where we used the relation \( \coth\frac{\pi}{2\delta}x \rightarrow \text{sgn}(x) \). 
Therefore, in the limit \( \delta \rightarrow +0 \), it reduces to the KdV equation.
On the other hand, in the limit \( \delta \rightarrow \infty \), 
we have
\begin{align}
    \frac{-1}{2\delta}\coth\frac{\pi}{2\delta}x &= \frac{-1}{\pi}\frac{1 + \mathcal{O}\left(\frac{1}{\delta^2}\right)}{x + \mathcal{O}\left(\frac{1}{\delta^3}\right)} \\
    &\rightarrow \frac{-1}{\pi}\frac{1}{x}.
\end{align}
Using the Hilbert transform defined by 
\begin{align}
    \mathcal{H}v(x) := \frac{1}{\pi} \dashint_{-\infty}^\infty \frac{v(x')\, dx'}{x-x'},
\end{align}
the equation reduces to
\begin{align}
    u_t + u_{3x} + 6(\mathcal{H}u_x)u_x = 0.
\end{align} 

Since both limits give integrable equations, we expect that \eqref{eq:singular-KdV} is also an integrable equation as the ILW equation \eqref{eq:ILW1}. In fact, we shall see in the following subsections that it is obtained from the KdV equation on a complex plane by imposing a solution to be holomorphic in a stripe domain.  
\subsection{Holomorphic function in a complex domain and its boundary values}
\begin{prop}\label{prop:cosh_residue}
We consider the complex domain \( D := \{z| -\delta < \Im z < \delta\} \).
Let \( U \) be holomorphic function in \( D \) and H\"{o}lder continuous on \( \partial D \). 
We denote $z=x+\ii y$ $(x,\,y \in \mathbb{R})$ and assume that
\begin{align}
\lim_{{x \to \pm \infty}} U(x + \ii y) &= U_{\pm\infty} \quad (\forall y),
\end{align}
where \( U_{\pm\infty} \) are constants. 
Then the following relation holds.
\begin{align}
U(x + \ii \delta)+U(x-\ii\delta)-U_\infty-U_{-\infty} &= -\frac{1}{\ii}T\left( U(x + \ii \delta)-U(x-\ii\delta)\right)
\end{align}
\end{prop}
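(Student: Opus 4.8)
The plan is to read the stated identity as a Sokhotski--Plemelj boundary relation for a Cauchy-type integral on the strip $D$, built from the kernel $\coth\frac{\pi}{2\delta}(w-z_0)$, whose period $2\delta\ii$ matches the width of $D$ and whose residue at each simple pole is $\frac{2\delta}{\pi}$. Concretely, for a point $z_0$ strictly inside $D$ I would apply the residue theorem to $\frac{1}{4\delta\ii}\coth\frac{\pi}{2\delta}(w-z_0)\,U(w)$ around the rectangle with corners $\pm L\pm\ii\delta$. Inside the strip the kernel has only the pole $w=z_0$ (its images $z_0\pm 2\delta\ii$ lie on $\Im w=\pm2\delta$, outside $D$), and computing the residue of the prefactored integrand gives $\frac{1}{4\delta\ii}\oint = U(z_0)$.

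Next I would send $L\to\infty$ and split the contour into four sides. On the vertical sides I would use $\coth\frac{\pi}{2\delta}(w-z_0)\to\pm1$ as $\Re w\to\pm\infty$ together with $U\to U_{\pm\infty}$; each vertical side then contributes $2\delta\ii\,U_{\pm\infty}$, so after the prefactor they produce $\tfrac12(U_\infty+U_{-\infty})$. The two horizontal integrals diverge individually, since their integrands tend to nonzero constants, but adding the top and bottom integrands makes these constants cancel at both ends, so the sum converges. Writing $U^\pm(s):=U(s\pm\ii\delta)$, this yields the representation
\begin{align*}
U(z_0)-\tfrac12(U_\infty+U_{-\infty}) =\frac{1}{4\delta\ii}\int_{-\infty}^\infty\Big[\coth\tfrac{\pi}{2\delta}(s-\ii\delta-z_0)\,U^-(s) -\coth\tfrac{\pi}{2\delta}(s+\ii\delta-z_0)\,U^+(s)\Big]\,ds.
\end{align*}

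Finally I would let $z_0\to x+\ii\delta$ from inside $D$. The key (and slightly surprising) point is that, by the $2\delta\ii$-periodicity of $\coth$, \emph{both} kernels degenerate to $\coth\frac{\pi}{2\delta}(s-x)$, singular at $s=x$: the top kernel has its pole at $s=z_0-\ii\delta$, while the bottom kernel has the neighbouring periodic image of its pole at the same point $s=z_0-\ii\delta$, and both reach $s=x$ in the limit. Using the Hölder continuity of $U^\pm$ to invoke Sokhotski--Plemelj, each integral splits into a principal value against the full $\coth$ kernel plus a jump $-2\ii\delta\,U^\pm(x)$ (the pole is approached from the lower half $s$-plane, and the $\frac{2\delta}{\pi}$ residue cancels the $\pi$). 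The jump terms then combine with the $U^+(x)$ on the left to symmetrize it into $\tfrac12\bigl(U^+(x)+U^-(x)\bigr)$; multiplying by $2$ and using that $\coth$ is odd together with the definition of $T$ in \eqref{def:singular_T} turns the surviving principal-value integral into exactly $-\frac1\ii\,T\bigl(U^+-U^-\bigr)(x)$, which is the claim.

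The main obstacle I expect is the bookkeeping in this last step: recognizing that periodicity makes both boundary integrals singular simultaneously, and fixing the Sokhotski--Plemelj signs (direction of approach, the oddness of $\coth$, and the residue factor $\frac{2\delta}{\pi}$) so that the half-residue jumps assemble into the symmetric left-hand side rather than spoiling it. The convergence-at-infinity issue in the middle step is secondary, but it must be handled through the cancellation between the two horizontal integrands rather than by treating them separately.
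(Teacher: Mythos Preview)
Your argument is correct and is essentially the paper's proof, executed with a small technical variation. The paper integrates the kernel $\frac{1}{2\delta}\tanh\frac{\pi}{2\delta}(x-z)$ around the same rectangle; since $\tanh\frac{\pi}{2\delta}(x-z)=\coth\frac{\pi}{2\delta}\bigl((x+\ii\delta)-z\bigr)$, this is your $\coth$ kernel with the parameter $z_0=x+\ii\delta$ placed on the boundary from the outset. The poles then sit on $\Im z=\pm\delta$, the interior contour integral vanishes, and the half-residues on the two horizontal sides produce directly the terms $2\ii\delta\,U(x\pm\ii\delta)$---no Sokhotski--Plemelj limit is needed. Your route instead keeps $z_0$ strictly inside, picks up the full residue $U(z_0)$, and then lets $z_0\to x+\ii\delta$; the Plemelj jumps you compute are exactly the same half-residues, so the two calculations coincide line by line once the dust settles. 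The paper's version is marginally shorter (one limit fewer), while yours yields as a byproduct the interior Cauchy-type representation of $U(z_0)$, which is a nice bonus.
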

\begin{proof}
Let us consider the following complex integral:
\begin{equation}\label{singular_integral}
\oint_{C}\frac{1}{2\delta}\tanh\left[\frac{\pi}{2\delta}(x-z)\right]\,f(z)\,dz
\end{equation}
The integration path is shown in Figure \ref{fig:contour}.
\begin{figure}[htbp]
\centering
\includegraphics[scale=0.5]{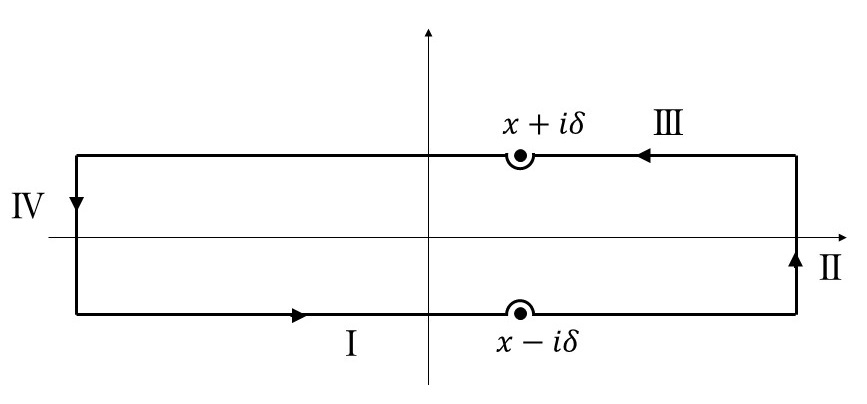}
\caption{Integration path \(C\)}
\label{fig:contour}
\end{figure}
Let the constant \( \delta > 0 \) and assume that \( U(z) \) has no singularities on \( \Im z= \pm \delta \) and the singularities inside the contour are only poles.

Given that
\begin{align}
\tanh \frac{\pi}{2\delta}(x-y+\ii \delta) = \coth \frac{\pi}{2 \delta}(x-y)
\end{align}
and near \( x-y \sim 0 \), \( \coth \frac{\pi}{2 \delta}(x-y) \sim \frac{2 \delta}{\pi (x-y)} \), we deduce

\begin{align}
\int_I &\rightarrow \dashint_{-\infty}^\infty \coth \left[\frac{\pi}{2 \delta}(x-y)\right]\, U(y-\ii \delta)\, dy +2\ii\delta U(x-\ii \delta) \\
\int_{I\!I\!I} &\rightarrow -\dashint_{-\infty}^\infty \coth \left[ \frac{\pi}{2 \delta}(x-y) \right]\, U(y+\ii \delta)\, dy +2 \ii \delta U(x+\ii \delta) 
\end{align}

From the assumption, we have
\begin{equation}
\int_{I\!I}\ \rightarrow \ -2\ii \delta U_\infty,\qquad\qquad  
\int_{I\!V}\ \rightarrow \ -2\ii \delta U_{-\infty}
\end{equation}



Thus, we obtain

\begin{align}
&\dashint_{-\infty}^\infty \coth \left[\frac{\pi}{2 \delta}(x-y)\right]\, \left\{U(y-\ii \delta) -U(y+\ii \delta)\right\}\, dy \nonumber\\
&=-2\ii \delta \left\{U(x+\ii \delta)+U(x-\ii \delta) -U_{\infty}-U_{-\infty}\right\} \label{eq:residue_formulae}
\end{align}

which completes the proof.
\end{proof}

The following corollary immediately follows from prop~\ref{prop:cosh_residue}.
\begin{cor}\label{cor:cosh_residue}
When
   \begin{align}
&U(x - \ii \delta) = U^*(x + \ii \delta) \\
&U_\infty+U_{-\infty}=0
\end{align}
holds, writing \( U(x + \ii \delta) \) = : \( w(x) + \ii v(x) \),
we have
\begin{align}
w(x) = \dashint_{-\infty}^\infty \frac{-1}{2\delta} \coth \left[\frac{\pi}{2 \delta}(x-y)\right]\,  v(y)\, dy  \label{eq_im-re-relation}
\end{align}
\end{cor}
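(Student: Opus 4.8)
The plan is to obtain the corollary as a direct specialization of Proposition~\ref{prop:cosh_residue}: the two hypotheses encode a Schwarz-type reflection symmetry across the real axis, and imposing them collapses the complex identity of the proposition onto its real part, which is precisely the claimed $\coth$-kernel relation. First I would set $w(x):=\Re U(x+\ii\delta)$ and $v(x):=\Im U(x+\ii\delta)$, so that $U(x+\ii\delta)=w(x)+\ii v(x)$ by definition. The hypothesis $U(x-\ii\delta)=U^*(x+\ii\delta)$ then yields the conjugate boundary value explicitly as $U(x-\ii\delta)=w(x)-\ii v(x)$.

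The computational core is to form the two combinations appearing in Proposition~\ref{prop:cosh_residue}. From the above,
\begin{align*}
U(x+\ii\delta)+U(x-\ii\delta)&=2w(x),\\
U(x+\ii\delta)-U(x-\ii\delta)&=2\ii\,v(x).
\end{align*}
Using the second hypothesis $U_\infty+U_{-\infty}=0$ to annihilate the constant terms on the left-hand side of the proposition's relation, and substituting these two combinations, reduces that relation to an identity involving only the real functions $w$ and $v$.

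It then remains to simplify the right-hand side. Here I would invoke the linearity of $T$ to pull the constant factor $2\ii$ outside, writing $T(2\ii v)=2\ii\,Tv$, and note that because the $\coth$ kernel in the definition of $T$ is real-valued, $Tv$ is again a real function; hence both sides of the reduced identity are manifestly real and may be equated directly. Dividing through by the resulting numerical factor isolates $w(x)$ and identifies it with $Tv(x)$, i.e. with $\dashint_{-\infty}^\infty \tfrac{-1}{2\delta}\coth[\tfrac{\pi}{2\delta}(x-y)]\,v(y)\,dy$, which is the assertion.

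The one point demanding care — and the only place the argument can go astray — is the bookkeeping of the imaginary unit: one must track the factor $-1/\ii=\ii$ through its product with $2\ii$ so that the purely imaginary ingredient $2\ii v$ on the right is converted into a real multiple of $Tv$ matching the real quantity $2w$ on the left. There is no analytic obstacle beyond Proposition~\ref{prop:cosh_residue} itself; the contour-integral and principal-value work has already been carried out there, so the corollary is genuinely a one-line consequence once the real and imaginary parts are separated.
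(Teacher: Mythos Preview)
Your approach is exactly the intended one: the paper offers no proof beyond ``immediately follows from Proposition~\ref{prop:cosh_residue},'' and your specialization via $U(x-\ii\delta)=w-\ii v$, $U(x+\ii\delta)+U(x-\ii\delta)=2w$, $U(x+\ii\delta)-U(x-\ii\delta)=2\ii v$ is precisely how that reduction goes. One caution on the sign bookkeeping you yourself flagged: carrying it through gives $2w = (-1/\ii)\,T(2\ii v) = -2\,Tv$, i.e.\ $w=-Tv$, which is the form the paper actually uses downstream (e.g.\ $U(x+\ii\delta)=-Tv+\ii v$); the displayed formula in the corollary thus appears to carry a sign slip relative to the rest of the paper.
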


\subsection{Derivation of Eq.~\eqref{eq:singular-KdV}}
We show that Eq.~(\ref{eq:singular-KdV}) is derived from the complex KdV equation
\begin{align}\label{eq:Complex_KdV}
    U(z,t)_t + U(z,t)_{3z} + 6U(z,t)U(z,t)_z = 0. 
\end{align}
on the domain \( D \).
Assume that \( U(z,t) \) is a solution of the KdV equation on \( D \) that satisfies the assumptions of corollary~\ref{cor:cosh_residue}.
Taking the limit of \( U \) as \( \Im z \rightarrow -\delta \) and using the real function \( v(x,t) \), we have
\begin{align}
  U(x+\ii\delta) = -Tv(x,t) + \ii v(x,t)  
\end{align}
which satisfies
\begin{align}
    (-Tv + \ii v)_t + (-Tv + \ii v)_{3x} + 6(-Tv + \ii v)(-Tv + \ii v)_x = 0.
\end{align}
Therefore, by extracting the imaginary part, we obtain
\begin{align}
    v_t + v_{3x} - 6(v Tv)_x = 0.
\end{align}
By substituting \( v = u_x \) and integrating once with respect to \( x \), it yields (\ref{eq:singular-KdV}).

\subsection{Soliton solutions}
Once the origin of the singular integral equation \eqref{eq:singular-KdV} is clarified as above, its properties such as general solutions, Lax pairs, and conserved quantities can be straightforwardly  obtained from those of the KdV equation. 

In this subsection, we discuss the solutions for (\ref{eq:singular-KdV}).
There are lots of established methods to construct the solutions of the KdV equation,  such as inverse scattering methods, Hirota bilinear methods, reduction from the KP hierarchy. 
Here we consider $N$ soliton solutions using the tau function $\tau(z,t)$.
The solution to the KdV equation is written using the tau function $\tau(z,t)$ as
\begin{align}
U(z,t) &= 2 \frac{\partial^2}{\partial z^2} \log \tau(z,t)
\end{align}
The $N$ soliton solutions are given as
\begin{align}
\tau(z,t)&=\sum_{\{\mu_k\}_{k=1}^N\in\{0,1\}^N}\exp{\sum_{i=1}^N\mu_i\eta_i +\sum_{1\le i<j\le N}\mu_i\mu_jA_{ij} },
\end{align}
where 
\begin{align*}
    \eta_i&:=k_i z-k_i^3 t-\theta_i,\quad (i=1,2,...,N)\\
    \e^{A_{ij}}&:=\frac{(k_i-k_j)^2}{(k_i+k_j)^2}\quad (1\le i<j\le N),
\end{align*}
for real parameters $k_i$ ($k_i \neq\pm k_j$ for $i\ne j$) and $\theta_i$. 
To ensure that the solution is holomorphic, it is sufficient to set \(0 < k_j\delta < \pi\) for any \(j\).

Similar to the derivation of the equation, 
by taking a limit as  \( \Im z \rightarrow \delta \), 
taking its imaginary part, 
and integrating once with respect to \( x \), 
the solution to (\ref{eq:singular-KdV}) is expressed as
\begin{align}\label{sol:singular-KdV}
u &=  2\frac{\Re \tau (\Im \tau)_x - (\Re \tau)_x \Im \tau}{\Re \tau^2 + \Im \tau^2}.
\end{align}
%

Specifically, we provide explicit expressions for the 1-soliton and 2-soliton solutions.
For a 1-soliton, the \(\tau\) function corresponding to \( U \) is given by
\begin{align}
\tau_1 = 1 + \e^{kz - k^3 t}
\end{align}
where \( k \in \mathbb{R} \) and we set $\theta=0$.
From the derived formula (\ref{sol:singular-KdV})
we find that the solution is
\begin{align}
u = \frac{k\sin{k\delta}}{\cosh{(kx-k^3t)}+\cos{k\delta}}.
\end{align}
Similarly, for a 2-soliton solution, 
the \(\tau\) function that provides the solution \( U \) is given by
\begin{align}
\tau_2 = 1 + \e^{k_1z - k_1^3 t} + \e^{k_2z - k_2^3 t} + \e^{(k_1 + k_2)z -(k_1^3 + k_2^3) t+A_{12}}
\end{align}
where \( k_1 , k_2 \in \mathbb{R} \) and \( \e^{A_{12}} = \left(\frac{k_1-k_2}{k_1+k_2}\right)^2 \). Instead of giving the concrete expression of the corresponding solution, let us discuss the asymptotic behavior of the solution corresponding to the 2-soliton solution as \( t \) approaches \( \pm \infty \).
We assume \( 0 < k_1 < k_2 \). 
Firstly let us consider a region $x-k_1^2t=const.$, that is, we consider a region where a soliton with wavelength $k_1$ exists. In this region, 
since $\e^{\eta_2} \gg \e^{\eta_1}$ as $t \to -\infty$,
we have
\begin{align}\label{eq:tau_2_2_plus}
\begin{cases}
        \Re \tau_2 &\sim \e^{\eta_2}(\cos k_2\delta + a_{12} \e^{\eta_1}\cos(k_1+k_2)\delta),\\
        \Im \tau_2 &\sim \e^{\eta_2}(\sin k_2\delta + a_{12} \e^{\eta_1}\sin(k_1+k_2)\delta),\\
        {\Re \tau_2}_x &\sim \e^{\eta_2}(k_2\cos k_2\delta + (k_1+k_2)a_{12} \e^{\eta_1}\cos(k_1+k_2)\delta),\\
        {\Im \tau_2}_x &\sim \e^{\eta_2}(k_2\sin k_2\delta + (k_1+k_2)a_{12} \e^{\eta_1}\sin(k_1+k_2)\delta).
\end{cases}
\end{align}
From (\ref{eq:tau_2_2_plus}),
we obtain
\begin{align}
    u \sim \frac{k_1\sin k_1\delta}{\cosh(\eta_1 + \alpha) + \cos k_1\delta},
\end{align}
where \(\alpha := 2\log \frac{k_2-k_1}{k_2 + k_1}\). 

As \(t \rightarrow \infty\), $\e^{\eta_1}\gg \e^{\eta_2}$ and \(\e^{\eta_2} \rightarrow 0\), we find
\begin{align}\label{eq:tau_2_2_minus}
\begin{cases}
        \Re \tau_2 &\sim 1 + \e^{\eta_1} \cos k_1\delta,\\
        \Im \tau_2 &\sim \e^{\eta_1} \sin k_1\delta,\\
        {\Re \tau_2}_x &\sim k_1\e^{\eta_1} \cos k_1\delta,\\
        {\Im \tau_2}_x &\sim k_1\e^{\eta_1} \sin k_1\delta.
\end{cases}
\end{align}
From (\ref{eq:tau_2_2_minus}),
we obtain
\begin{align}
    u \sim \frac{k_1\sin k_1\delta}{\cosh \eta_1 + \cos k_1\delta}.
\end{align}
Similarly, in the region where a soliton with wavenumber $k_2$ exists,
as \( t \rightarrow -\infty \), 
\begin{align}
    u \sim \frac{k_2\sin k_2\delta}{\cosh \eta_2 + \cos k_2\delta}
\end{align}
and as \( t \rightarrow \infty \), 
\begin{align}
    u \sim \frac{-k_2\sin k_2\delta}{\cosh(\eta_2 + \alpha) + \cos k_2\delta}.
\end{align}
Therefore, the asymptotic behavior of the 2 soliton solution is as follows:
\begin{align}
    \begin{cases}
        u \sim \dfrac{k_1\sin k_1\delta}{\cosh(\eta_1 + \alpha) + \cos k_1\delta} + \dfrac{k_2\sin k_2\delta}{\cosh \eta_2 + \cos k_2\delta} \quad (t \rightarrow -\infty),\\[10pt]
        u \sim \dfrac{k_1\sin k_1\delta}{\cosh \eta_1 + \cos k_1\delta} + \dfrac{k_2\sin k_2\delta}{\cosh(\eta_2 + \alpha) + \cos k_2\delta} \quad (t \rightarrow \infty).
    \end{cases}
\end{align}
%


\subsection{Lax equation and conserved quantities}
The Lax equation associated with the KdV equation is 
\begin{align}
    \pdv{L}{t} = [B,L] \label{eq:complex Lax}
\end{align}
for
\begin{align}
    \begin{cases}
        L &= - \partial_z^2 + U, \\
        B &= -4\partial_z^3 + 6U\partial_z + 3U_z.
    \end{cases}
\end{align}
It is obvious that the Lax equation for the singular integral equation \eqref{eq:singular-KdV} is equivalent to \eqref{eq:complex Lax}.
Since $U(x+\ii\delta,t)=-T(v)(x,t)+\ii v(x,t)$, by considering  the real  and imaginary parts of the operator:
\begin{dfn}
\begin{align}
    \begin{cases}
       L_R &:= -\partial^2 - T(v),\\
        L_I &:= v,
    \end{cases}
\end{align}
and
\begin{align}
    \begin{cases}
    B_R&:= -4\partial^3 - 6T(v)\partial - 3(T(v))_x,\\
    B_I&:= 6v\partial + 3v_x,
    \end{cases}
\end{align}
\end{dfn}
we find
\begin{align}
        \pdv{}{t}L_R &= [B_R, L_R] - [B_I, L_I],\label{eq:real lax kdv}\\
        \pdv{}{t}L_I &= [B_I, L_R] + [B_R, L_I].\label{eq:im lax kdv}
\end{align}
Computing (\ref{eq:real lax kdv}) and (\ref{eq:im lax kdv}) respectively, we get 
\begin{align}
    (T(v))_t = -  (T(v))_{3x} + 3( (T(v))^2 + v^2)_x\label{eq:real lax explicit kdv}, 
\end{align}
and
\begin{align}
    v_t = -v_{3x} + 6( (T(v))\cdot v)_x. \label{eq:im lax explicit kdv}
\end{align}
Clearly \eqref{eq:im lax explicit kdv} is equivalent to \eqref{eq:singular-KdV}.
At a glance, the Lax representation for \eqref{eq:singular-KdV} consists of two simultaneous equations \eqref{eq:real lax kdv} and \eqref{eq:im lax kdv}, because of holomorphic nature of the dependent variable $U$, it is shown that if one holds, the other also holds.
\begin{prop}
    If either (\ref{eq:real lax explicit kdv}) or (\ref{eq:im lax explicit kdv}) holds, then the other also holds.
\end{prop}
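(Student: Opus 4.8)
The plan is to read \eqref{eq:real lax explicit kdv} and \eqref{eq:im lax explicit kdv} as the real and imaginary parts, on the line $\Im z=\delta$, of a single holomorphic quantity, and then to pass between the two parts using Corollary~\ref{cor:cosh_residue}. Recall that $U$ is the holomorphic function on $D$ with boundary value $U(x+\ii\delta,t)=-Tv+\ii v$ and the reality symmetry of Corollary~\ref{cor:cosh_residue}. Set
\begin{align}
  F := U_{t}+U_{3z}+6UU_{z},
\end{align}
which is holomorphic in $D$ because $U$ is. Since $U$ is holomorphic, $\partial_{z}$ acts as $\partial_{x}$ on the boundary trace, so computing the imaginary part of $F(x+\ii\delta,t)$ gives $v_{t}+v_{3x}-6(Tv\cdot v)_{x}$, while its real part reproduces the combination of $(Tv)_{t}$, $(Tv)_{3x}$ and quadratic terms appearing in \eqref{eq:real lax explicit kdv}. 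Hence \eqref{eq:im lax explicit kdv} is precisely the statement $\Im F(x+\ii\delta,t)=0$, and \eqref{eq:real lax explicit kdv} is $\Re F(x+\ii\delta,t)=0$.

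First I would check that $F$ is itself admissible for Corollary~\ref{cor:cosh_residue}. The boundary symmetry of $U$ is equivalent, by boundary uniqueness of holomorphic functions on the strip, to the reality condition $U(\bar z,t)=\overline{U(z,t)}$ throughout $D$; this property is preserved under $\partial_{z}$, under $\partial_{t}$, and under multiplication, so $F(\bar z,t)=\overline{F(z,t)}$, i.e. $F(x-\ii\delta)=F^{*}(x+\ii\delta)$. Because $U\to U_{\pm\infty}$ (constants) as $x\to\pm\infty$, every term of $F$ vanishes in that limit, so $F_{\infty}=F_{-\infty}=0$ and in particular $F_{\infty}+F_{-\infty}=0$. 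Granting that $U$ and its $x$-derivatives remain H\"older continuous up to $\partial D$, Corollary~\ref{cor:cosh_residue} applies to $F$ and yields
\begin{align}
  \Re F(x+\ii\delta,t)=-\,T\big(\Im F(x+\ii\delta,t)\big).
\end{align}

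The two implications then follow at once. If \eqref{eq:im lax explicit kdv} holds, then $\Im F(x+\ii\delta,t)=0$, so $\Re F(x+\ii\delta,t)=-T(0)=0$, which is \eqref{eq:real lax explicit kdv}. Conversely, if \eqref{eq:real lax explicit kdv} holds, then $T\big(\Im F(x+\ii\delta,t)\big)=0$; since the only null functions of $T$ are constants while $\Im F$ decays at infinity, this forces $\Im F(x+\ii\delta,t)=0$, which is \eqref{eq:im lax explicit kdv}. Equivalently, one may argue by the maximum principle: the reality symmetry makes the relevant harmonic part of $F$ vanish on both edges of the strip, and a bounded harmonic function on $D$ with zero boundary data and decay at infinity is identically zero, so $F\equiv0$ in $D$ and both boundary equations hold simultaneously.

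The main obstacle is the analytic bookkeeping behind the single clean relation $\Re F(x+\ii\delta,t)=-T(\Im F(x+\ii\delta,t))$: one must confirm that the boundary reality symmetry genuinely propagates to every derivative and product term assembled in $F$ (a Schwarz-reflection argument), and that the regularity and uniform decay hypotheses of Corollary~\ref{cor:cosh_residue} survive differentiation, so that the corollary is legitimately applicable to $F$ and not merely to $U$. Once that is secured the equivalence is immediate, and the only remaining subtlety—the triviality of the kernel of $T$ on decaying functions, needed for the converse—follows from the nonvanishing of its convolution symbol at nonzero frequencies, or from the maximum-principle argument above.
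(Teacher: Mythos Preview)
Your argument is correct and takes a genuinely different route from the paper's. The paper argues locally via the Cauchy--Riemann relations: writing $a=\Re U$ and $b=\Im U$ as functions on the strip, it differentiates the imaginary-part equation in the transverse variable $y$, substitutes $a_x=b_y$ and $a_y=-b_x$, and integrates once in $x$ to recover the real-part equation. Your route is global: you package both equations as $\Re F=0$ and $\Im F=0$ for the single holomorphic quantity $F=U_t+U_{3z}+6UU_z$, check that $F$ inherits the reality symmetry and decay of $U$, and then invoke Corollary~\ref{cor:cosh_residue} directly on $F$ to obtain the boundary relation $\Re F(x+\ii\delta)=T\bigl(\Im F(x+\ii\delta)\bigr)$. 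The forward implication is then immediate, and for the converse you correctly isolate the extra ingredient needed (trivial kernel of $T$ on decaying data, or the maximum-principle variant). What your approach buys is transparency and portability: the same reasoning applies verbatim to any holomorphic differential polynomial $P[U]$, which is exactly the situation recurring throughout the paper (conserved densities, other hierarchies). The paper's Cauchy--Riemann calculation is shorter on the page but leaves implicit why one may differentiate a boundary identity in the normal direction; your formulation makes that step explicit through the holomorphy of $F$ and the already-established Corollary~\ref{cor:cosh_residue}.
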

\begin{proof}
    We demonstrate that when (\ref{eq:im lax explicit kdv}) holds, 
(\ref{eq:real lax explicit kdv}) also holds. The converse can be proven in a similar manner. 

Let $a:=\Re U(=-T(v))$ and $b:=\Im U(=v)$ for simplicity.
By performing partial differentiation with respect to \( y \) on both sides of equation (\ref{eq:real lax explicit kdv}) and substituting in the Cauchy-Riemann equations \( a_x = b_y \) and \( a_y = -b_x \), we obtain 
\begin{align}
    a_{xt} = -a_{4x} - 6(aa_x - bb_x).
\end{align}
By integrating once with respect to \(x\), we obtain (\ref{eq:real lax explicit kdv}).
\end{proof}
%
%

Conserved quantities of (\ref{sol:singular-KdV}) is also constructed from the KdV equation \eqref{eq:Complex_KdV}. 
By applying the generalized Gardiner transformation 
\begin{align}\label{eq:Gardiner_trans}
    U = -\{W + \varepsilon W_z + \varepsilon^2 W^2\}
\end{align}
to (\ref{eq:Complex_KdV}), we get
\begin{align}
U_t + U_{3z} + 6UU_z = -\left(1 + \varepsilon\pdv{}{z} + 2\varepsilon^2 W \right)\left(W_t -6(W+\varepsilon^2W^2)W_z + W_{3z}\right).
\end{align}
This implies that if the condition 
\begin{align}\label{eq:cons_relation}
    W_t -6(W+\varepsilon^2W^2)W_z + W_{3z}=W_t +(-3W^2-2\varepsilon^2W^3+W_{2z})_z=0
\end{align}
holds, then \( U \) is a solution to the KdV equation.
Because a conserved quantity is a pair $(\rho, J)$, where both $\rho$ and $J$ are polynomials of $U$ and its partial derivatives of $z$, and satisfy
\begin{align*}
    \rho_t+J_z=0,
\end{align*}
or $\left(\int \rho \, dx\right)_t=0$ for $J(+\infty,t)-J(-\infty,t)=0$,
\eqref{eq:cons_relation} gives an infinite number of conserved quantities. 
We assume 
\begin{align}
    W = W_0 + \varepsilon W_1 + \varepsilon^2 W_2 + \varepsilon^3 W_3 + \varepsilon^4 W_4 + \varepsilon^5 W_5 + \cdots 
\end{align}
and solve (\ref{eq:Gardiner_trans}) sequentially, we get
\begin{align}
    &W_0 = -U,\  W_1 = U_z,\  W_2 = -U_{2z} - U^2,\nonumber \\
    &W_3 = U_{3z}+4UU_z,\  W_4 =  - U_{4z}-5U_z^2-6UU_z -2 U^3,\ \cdots\nonumber.
\end{align}
From (\ref{eq:cons_relation}), \( W \) is a conserved quantity, and since \( \varepsilon \) is arbitrary, \( W_n \) (\( n = 0, 1, 2, \ldots \)) are also conserved quantities.
Note that \( W_n \) is a differential polynomial of \( U \). 
Therefore, if \( U \) is holomorphic, \( W_n \) is also holomorphic.
Furthermore, non-trivial conserved quantities exist for even \( n \).
Thus, for \( W_n \), by taking the imaginary part as \( \Im z \rightarrow \delta \), 
we find that the infinite number of conserved quantities $\{\rho_k\}_{k=0}^\infty$ are given as
\begin{align}
&\rho_0 = u_x, \quad \rho_1=u_{xx},\quad \rho_2 = u_{3x}+2u_xTu_x, \nonumber\\
&\rho_3=u_{4x}-4(u_xTu_x)_x,\nonumber\\
&\rho_4=-u_{5x}+10u_{xx}Tu_{xx}+6(u_xTu_x)_x -2(3(Tu_x)^2u_x-u^3_x),\cdots .
\end{align}
%
%
%
\subsection{Another example related to (\ref{eq:singular-KdV})}
The above mentioned analytic reduction gives different equations which incorporate several integral terms.
An example is 
\begin{align}\label{eq:singular_KdV_it}
    Tu_t + Tu_{3x} - 3\left((Tu)^2+ u^2\right)_x = 0.
\end{align}
This equation and solutions are obtained from Eq.\eqref{eq:Complex_KdV} and just the counterpart of (\ref{eq:singular-KdV}).
The solution is obtained from that of the complex KdV equation as
\begin{align*}
    u(x,t) = \Re U(x+\ii\delta,\ii t).
\end{align*}
A solution which corresponds to a 1-soliton solution is 
\begin{align}
    u(x,t) &=\Re\left[\frac{k^2}{(1+\e^{kx+x^3t+\ii k \delta})(1+\e^{-kx-k^3t-\ii k \delta})} \right],\nonumber \\
    &=\frac{-k^2\{1+\cosh (kx+k^3t) \cos k \delta\}}{\cosh^2(kx+k^3t)+2\cosh (kx+k^3 t)\cos k\delta +\cos^2k\delta},
\end{align}
where \(k\) is a real constant.

%
%
\section{Generalization of integrable integro-differential equations}\label{sect:Generalization}
%
In this section, we extend the construction of integrable integro-differential equations shown in the previous section, and obtain, hierarchies, difference-integral equations, elliptic singular integral equations. 

\subsection{Generalization to the KP hierarchy}

Application of the settings in section~\ref{sect:2_residue} to the KP hierarchy is straight forward and gives a series of integrable integro-differential equations.
In fact, what we have to consider is if the solutions are holomorphic in the given domain or not.
Let us recall the construction of the KP hierarchy\cite{Miwa-Jimbo-Datetextbook}.
We consider the pseudo-differential operator which is sometimes called the dressing operator:
\begin{align*}
    W(\partial^{-1}):=1+w_1(\x)\partial^{-1}+w_2(\x\partial^{-2}+...,
\end{align*}
where $\x:=(x_1\equiv z,x_2,x_3,...)$ denotes an infinite number of independent variables, $w_k(\x)$ ($k=1,2,...$) are infinite number of dependent variables, $\partial\equiv \frac{\partial}{\partial z}$, and $\partial^{-1}$ is its formal inverse that satisfies
\begin{align*}
    \partial^{-n}f=\sum_{k=1}^\infty \begin{pmatrix}-n \\ k \end{pmatrix}f^{(k)}\partial^{-n-k}, \quad 
    \begin{pmatrix}-n \\ k \end{pmatrix}=\frac{(-n)(-n-1)\cdots(-n-k+1)}{k!}.
\end{align*}.
Then, denoting the differential part of a pseudo-differential operator $P$ by $(P)_+$,
\begin{align*}
    L&:=W \partial W^{-1},\\
   B_n&:=(W\partial^nW^{-1})_+,
\end{align*}
we obtain the KP hierarchy
\begin{align*}
    (L)_{t_n}=\left[B_n, L \right]\qquad (n=2,3,4,...)
\end{align*}
and the Zakharov-Shabat equations for ${}^\forall k,\, l$: 
\begin{align}
    \left(\partial_k \hat{B}_l\right)-\left(\partial_l \hat{B}_k\right)=\left[\hat{B}_k, \hat{B}_l\right]\equiv
    \hat{B}_k\hat{B}_l-\hat{B}_l\hat{B}_k. \label{eqn_Zakharov_Shabat}.
\end{align}

The solutions to the KP hierarchy and the Zakharov-Shabat equations are given by the tau function $\tau(\x)$.
For $w_k(\x)=0$ ($k \ge n+1$), $\tau(\x)$ is given by the Wronskii determinant
\begin{align}\label{tau:Wronskii}
    \tau(\x):=\mdet{f_1&f_2&\cdots&f_n\\(f_1)_z&(f_2)_z&\cdots&(f_n)_z\\
    \vdots&\vdots&\ddots&\vdots\\(f_1)_{(n-1)z}&(f_2)_{(n-1)z}&\cdots&(f_n)_{(n-1)z}}.
\end{align}
Here $f_i(\x)$ ($i=1,2,...,n$) are $n$ independent functions which satisfy simultaneous linear partial differential equations:
\begin{align*}
    (f)_{x_k}=\partial^k f \qquad (k=2,3,4,...).
\end{align*}
By putting 
\begin{align*}
\hat{W}:=W\partial^n=\partial^n+w_1(\x)\partial^{n-1}+\ldots+w_n,    
\end{align*}
$w_k(\x)$ ($k=1,2,...,n$) are determined by
\begin{align*}
    \hat{W}f_j(\x)=0\qquad (j=1,2,..,n)
\end{align*}
and expressed by $\tau(\x)$ and its derivatives with respect to $z$.
For example $w_1(\x)=(\log \tau(\x))_z$.

To obtain series of integro-differential equations, we suppose the constraint with $\partial^*\equiv \frac{\partial}{\partial z^*}$:
\begin{align*}
   \left[\partial^*, W(\partial^{-1})\right]=0,\qquad  W(\partial^{-1})^*=W(\partial^{-1}).
\end{align*}
in some domain $D \subseteq \bC$.
This means that 
\begin{align*}
    \partial^*w_i(\x)=0\quad (i=1,2,...) \quad (z\in D).
\end{align*}
Applying the residue formula \eqref{eq:residue_formulae} with appropriate boundary conditions, 
\begin{align*}
&\dashint_{-\infty}^\infty \coth \left[\frac{\pi}{2 \delta}(x-y)\right]\, \left\{w_i(y-\ii \delta) -w_i(y+\ii \delta)\right\}\, dy \nonumber\\
&=-2\ii \delta \left\{w_i(x+\ii \delta)+w_i(x-\ii \delta) \right\} 
\quad (i=1,2,...,n),
\end{align*}
where $w_i(x+\ii\delta)$ is an abbreviation of $w_i(x+\ii\delta,x_2,x_3,...)$.
If we further suppose that $w_j(\x)$ is analytic with respect to $x_k$, we have similar relation and obtain singular integral equations for $x_k$. 

As an example, let us consider the simplest Zakharov-Shabat equation for $k=2,\, l=3$. By putting $u(\x):=(w_1(\x))_z$ and $x_1=x,\, x_2=y,\, x_3=t$,
we find an integro-differential equation of KP-type as
\begin{align}\label{eq:singular_KP_x}
    (4u_t +6(uT^{(\alpha)}u)_x - u_{3x})_x = 3u_{yy}\quad (\alpha=x,\,y)
\end{align}
for \(u = u(x,y,t)\).
Here $T^{(x)}$ ($T^{(y)}$) is the singular integral operator \eqref{def:singular_T} with respect to $x$ ($y$).
To derive Eq. \eqref{eq:singular_KP_x} and solutions of that we consider the following Kadomtsev-Petviashvili (KP) equation on \( D = \{z | -\delta_1  < \Im z < \delta_1\} \times \{w | -\delta_2 < \Im w < \delta_2\}\) for \( U = U(z, w, t) \):
\begin{align}\label{eq:complex_KP}
    (U_t - 6UU_z - U_{3z})_z = 3U_{ww}.
\end{align}
Let \(U(z,w,t)\) be a holomorphic solution for \eqref{eq:singular_KP_x} which means \(U\) is holomorphic in each variable \(z\) or \(w\).
Focusing on the boundary value as \(\Im z \to -\delta_1\) or \(\Im w \to -\delta_2\) gives
\begin{align*}
    U(x+\ii\delta_1,y,t) &= -T^{(x)}u(x,y,t) + \ii u(x,y,t),\\
    U(x,y+\ii\delta_2,t) &= -T^{(y)}u(x,y,t) + \ii u(x,y,t),
\end{align*}
where \(u(x,w,t)\) is a suitable real function.
Inserting this into Eq.\eqref{eq:complex_KP} and extracting the imaginary part yields Eq.\eqref{eq:singular_KP_x}. 
Furthermore, it becomes apparent that the solution is given by \( u(x,y,t) = \Im U(x-\ii\delta_1,y,t) \) or \( u(x,y,t) = \Im U(x,y-\ii\delta_2,t) \).\\
For example, to compute a solution corresponds to 1-soliton solution,
we take 
\begin{align}
    \tau(z,w,t) = 1 + \e^{kz + k^2w + k^3t},
\end{align}
where \(k\) is real.
Since a solution of Eq.\eqref{eq:complex_KP} is given by
\begin{align}
    U(z,w,t) = 2\pdv[2]{}{z}\log \tau(z,w,t),
\end{align}
we obtain
\begin{align}
    u(x,y,t) = -k^2 \frac{\sinh(kx + k^2y + k^3t)\sin(k\delta_1)}{\left(\cosh^2(kx + k^2y + k^3t)+\cos(k\delta_1)\right)^2},
\end{align}
or
\begin{align}
    u(x,y,t) = -k^2 \frac{\sinh(kx + k^2y + k^3t)\sin(k^2\delta_2)}{\left(\cosh^2(kx + k^2y + k^3t)+\cos(k^2\delta_2)\right)^2},
\end{align}
The sufficient condition on which \(\tau\) is holomorphic is that the inequality 
\begin{align}
    -\pi < k\delta_1< \pi, \ \mbox{or} \  -\pi <  k^2\delta_2 < \pi
\end{align}
holds.
%
\subsection{ILW equation}
Here, we briefly argue the long wave (ILW) equation \eqref{eq:ILW1}.
Using the identity \eqref{eq:residue_formulae}, when $u(x)$ in \eqref{eq:ILW1} is expressed as
\begin{align}
    u(x)=-\frac{\ii}{2}\left[U(x+\ii \delta )-U(x-\ii\delta)\right], \quad U_\infty+U_{-\infty}=0,
\end{align}
we have 
\begin{align}\label{eqn:bilinear_ILW}
(U^+-U^-)_t+\frac{1}{\delta}(U^+-U^-)_x-\ii(U^+-U^-)(U^+-U^-)_x-\ii(U^++U^-)_{xx}=0,    
\end{align}
where $U^+:=U(t,x+\ii\delta)$ and $U^-:=U(t,x-\ii\delta)$.
By substituting $U^\pm=(\log F^\pm)_z$, we find
\begin{align}\label{eq:modified_KdV}
    \left[D_t+\frac{1}{\delta}D_x+\ii D_x^2+C(t) \right] F^+\cdot F^-=0,
\end{align}
where $D_t$ and $D_x$ are Hirota bilinear operators ($D_x^m a\cdot b =\lim_{x' \to x} (\partial_x-\partial_{x'})^m a(x)b(x')$ etc.), and $C(t)$ is an arbitrary smooth function of $t$.
Equation \eqref{eq:modified_KdV} is the lowest bilinear equation of the modified KdV hierarchy and its solutions and generalisation have been discussed in \cite{TutiyaSatsuma2003} in detail.

\subsection{Toda type equation}
Next, we turn our attention to the Toda lattice, which stands as a paradigmatic integrable system. 
The current-voltage form of the Toda lattice for \(I_n(t), V_n(t)\) is represented as:
\begin{align}
\begin{cases}
   \displaystyle \dv[]{I_n(t)}{t} &= V_{n-1}(t) - V_n(t), \\[5pt]
   \displaystyle  \dv[]{V_n(t)}{t} &= V_n(t)(I_n(t) - I_{n+1}(t))\label{eq:toda-vi-form},
\end{cases}
\end{align}
where \( n \in \mathbb{Z}, t \in \mathbb{R}\). 
Traditionally, in \(I_n(t), V_n(t)\), \(n\) denotes a discrete spatial variable, while \(t\) signifies a continuous time variable. 
For the purpose of streamlining our interpretation of the equations here, 
we regard \(n\) as a discrete time variable and rewrite \(t\) as \(x\), interpreting it as a continuous spatial variable.

We perform operations similar to those in the previous section.
For the holomorphic solutions \(I_n(z)\) and \(V_n(z)\), 
when taking the limit as \(\Im z \rightarrow \delta\), 
using the real functions \(u_n(x)\) and \(v_n(x)\), 
we have
\begin{align}
\begin{cases}
I_n(x+\ii \delta) &= -T u_n(x) + iu_n(x) \\
V_n(x + \ii \delta) &= -T v_n(x) + iv_n(x).
\end{cases}
\end{align}

By Substituting the relation into equation (\ref{eq:toda-vi-form}) 
and taking the imaginary part, 
then introducing \(w_n := u_{n+1} - u_n\), 
we obtain
\begin{align}\label{eq:singular-Toda}
\begin{cases}
\displaystyle\frac{-d w_n(x)}{dx}&=v_{n+1}(x) - 2v_n(x) + v_{n-1}(x), \\[5pt]
\displaystyle\frac{d v_n(x)}{dx} &= v_n(x)Tw_n(x) + w_n(x) Tv_n(x).
\end{cases}
\end{align}

The tau function corresponding to the 1-soliton solution of the Toda lattice is given by 
\begin{align}
    \tau_n(z) = 1 + \e^{kn + lz},
\end{align}
where \(l\) satisfies \( l = \pm 2 \sinh{\frac{k}{2}}\).
Through calculation, we find
\begin{align}
    \begin{cases}
        w_n &= \dfrac{l \sin{l\delta}}{2}\left(\dfrac{1}{\cosh{(k(n+1) + lx)}+ \cos{l\omega}}-2\dfrac{1}{\cosh({kn + lx})+ \cos{l\omega}} \right.\\
        &\qquad \qquad \qquad \qquad \left.+ \dfrac{1}{\cosh{(k(n-1) + lx)}+ \cos{l\omega}} \right),\\[10pt]
        v_n &= 2 \sin l \delta \sinh^2(\dfrac{k}{2}) \dfrac{\sinh(kn+lx)}{(\cosh(kn + lx) + \cos l\delta)^2}.
    \end{cases}
\end{align}

The tau function associated with 2-soliton is
\begin{align}
    \tau_n(z) = 1 + \e^{k_1n + l_1z} + \e^{k_2n + l_2z} + a_{12}\e^{(k_1+k_2)n + (l_1+l_2)z},
\end{align}
where 
\begin{align*}
&l_i= \pm 2\sinh^2\frac{k_i}{2} \quad(i = 1,2), \qquad a_{12}=\left(\frac{\sinh \frac{k_1-k_2}{4}}{\sinh \frac{k_1+k_2}{4}}\right)^2.
\end{align*}

\subsection{Elliptic singular integral}
We extend the singular integral \eqref{def:singular_T} to an elliptic singular integral.
The Jacobi sn function of modulus $k$ ($0<k<1$),  $\sn(z,k)$, is defined by
the integral:
\begin{align}\label{eq:sn_function}
 z=\int_0^{\sn(z,k)}\frac{dx}{\sqrt{(1-x^2)(1-k^2x^2)}}.    
\end{align}
It is a doubly periodic function with periods $4K$ and $2\ii K'$ and has two simple poles at $\ii K'$ and $2K+\ii K'$, where
\begin{align}
   K&:=\int_0^1\frac{dx}{\sqrt{(1-x^2)(1-k^2x^2)}},\\
   K'&:= \int_0^1\frac{dx}{\sqrt{(1-x^2)(1-{k'}^2x^2)}}\qquad (k':=\sqrt{1-k^2}).
\end{align}
Its properties are summarised as 
\begin{subequations}
\begin{align}
&\sn(z+4K,k)=\sn(z+2\ii K',k)=\sn(z,k)\label{eq:ellip_a}\\
&\sn(z,k)=-\sn(-z,k),\quad \sn(z+2K,k)=-\sn(z,k) \label{eq:ellip_b}\\
&\sn(z+\ii K')=\sn(z-\ii K')=\frac{1}{k}\ns(z,k) \label{eq:ellip_c}\\
&\sn(z,k)=z-\frac{1+k^2}{3!}z^3+\frac{1+14k^2+k^4}{5!}z^5-\cdots, \label{eq:ellip_d}\\
&\Res_{z=0}\left[\ns(z,k) \right]=1,\quad \Res_{z=2K}\left[\ns(z,k) \right]=-1\label{eq:ellip_e}\\
&\lim_{k\to 1}\sn(z,k)=\tanh z,\qquad \lim_{k\to 0}\sn(z,k)=\sin z
\label{eq:ellip_f}
\end{align}
\end{subequations}
Here $\ns(z,k)$ is the Jacobi ns function: $\ns(z,k)=(\sn(z,k))^{-1}$.

Let us prove the following proposition similar to prop.~\ref{prop:cosh_residue}.
\begin{prop}\label{prop:ns_residue}
We consider the complex domain \( D := \{z| -K' < \Im z < K'\} \).
Let \( U \) be holomorphic function in \( D \) and H\"{o}lder continuous on \( \partial D \). 
Then the following relation holds.
\begin{align}
&\dashint_{x-K}^{x+K}\ns(x-y,k)\left\{U(y-\ii K')-U(y+\ii K')\right\}\,dy\nonumber\\
&+\ii \int_{-K'}^{K'}k\,\sn(-K-\ii \eta,k)\left\{U(x+K+\ii \eta)+U(x-K+\ii \eta)\right\}\, d\eta 
\nonumber\\
&=-\ii \pi \left\{U(x-\ii K')+U(x+\ii K')  \right\}\label{eq:elliptic_s_integral_formula}
\end{align}
In particular, if $U(z)$ is an anti-periodic function which satisfies
\begin{align}\label{eq:U_antisym}
    U(z+2K)=-U(z)
\end{align}
then, we have
\begin{align}
    &\dashint_{x-K}^{x+K}\ns(x-y,k)\left\{U(y-\ii K')-U(y+\ii K')\right\}\,dy\nonumber\\
&=-\ii \pi \left\{U(x-\ii K')+U(x+\ii K')  \right\}.\label{eq:elliptic_s_integral_formula2}
\end{align}
\end{prop}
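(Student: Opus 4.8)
The plan is to reprove Proposition~\ref{prop:ns_residue} by the contour--residue method already used for Proposition~\ref{prop:cosh_residue}, with two changes dictated by the elliptic setting: the kernel $\tfrac{1}{2\delta}\tanh\tfrac{\pi}{2\delta}(x-z)$ is replaced by $k\,\sn(x-z,k)$, and the infinite strip is replaced by the finite rectangle $C$ whose sides match the lattice of $\sn$. Explicitly, I would integrate
\begin{equation*}
\oint_C k\,\sn(x-z,k)\,U(z)\,dz
\end{equation*}
over the positively oriented rectangle $C$ with vertices $x\pm K\pm\ii K'$. The height $2K'$ is the imaginary period and the width $2K$ is the real half-period of $\sn$, so the kernel on opposite sides is linked by the symmetries \eqref{eq:ellip_b} and \eqref{eq:ellip_c}. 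The reason for the factor $k\,\sn$ (rather than $\ns$) is that \eqref{eq:ellip_c} gives $k\,\sn(w\pm\ii K',k)=\ns(w,k)$, which is exactly what is needed to make the horizontal edges produce the kernel $\ns(x-y,k)$ of the statement.

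First I would reduce the integrand on each of the four sides. On the horizontal edges $z=y\mp\ii K'$ the identity $k\,\sn(x-y\pm\ii K',k)=\ns(x-y,k)$ turns the two segments, traversed in opposite directions, into $\dashint_{x-K}^{x+K}\ns(x-y,k)\{U(y-\ii K')-U(y+\ii K')\}\,dy$. On the vertical edges $z=x\pm K+\ii\eta$ the half-period relation $\sn(w+2K,k)=-\sn(w,k)$ from \eqref{eq:ellip_b} makes the kernel on the left edge the negative of that on the right edge; combining the two with their orientations yields precisely the term $\ii\int_{-K'}^{K'}k\,\sn(-K-\ii\eta,k)\{U(x+K+\ii\eta)+U(x-K+\ii\eta)\}\,d\eta$ of \eqref{eq:elliptic_s_integral_formula}.

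Next I would locate the poles. As $U$ is holomorphic in $D$ and the open rectangle lies in $D$, the only singularities are the poles of $\sn(x-z,k)$, i.e.\ the points with $x-z=\ii K'$ or, by periodicity, $x-z=-\ii K'$; these are $z=x-\ii K'$ and $z=x+\ii K'$, lying on the bottom and top edges respectively. Reading the local behaviour of $\sn$ near $\pm\ii K'$ off \eqref{eq:ellip_c} and the residues \eqref{eq:ellip_e} of $\ns$, the kernel $k\,\sn(x-z,k)$ has residue $-1$ at each of these points, so the integrand has residue $-U(x\mp\ii K')$ there. Interpreting the horizontal integrals as principal values is equivalent to indenting $C$ around these two boundary poles, which contributes $\pi\ii$ times each residue; since there is no pole in the interior, the whole contour integral equals $\pi\ii\{-U(x-\ii K')-U(x+\ii K')\}=-\ii\pi\{U(x-\ii K')+U(x+\ii K')\}$. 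Equating this with the sum of the edge contributions from the previous step gives \eqref{eq:elliptic_s_integral_formula}.

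Finally, the anti-periodic case follows at once: assuming \eqref{eq:U_antisym}, on the vertical edges $U(x-K+\ii\eta)=U((x+K+\ii\eta)-2K)=-U(x+K+\ii\eta)$, so the vertical integrand cancels identically and \eqref{eq:elliptic_s_integral_formula} collapses to \eqref{eq:elliptic_s_integral_formula2}. The hard part will be the rigorous handling of the two poles that sit on the contour: one must justify the principal-value/half-residue decomposition on the horizontal edges --- this is exactly where H\"older continuity of $U$ on $\partial D$ enters --- and track every sign arising from the edge orientations and from the odd- and half-period symmetries of $\sn$, so that the residues $-1$ and the four edge reductions assemble into precisely the claimed identity rather than a sign-flipped version of it.
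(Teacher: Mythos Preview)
Your proposal is correct and follows essentially the same approach as the paper: the paper also integrates $k\,\sn(x-z,k)\,U(z)$ around the rectangle with vertices $x\pm K\pm\ii K'$, invokes $k\,\sn(w\pm\ii K',k)=\ns(w,k)$ on the horizontal edges, and uses the residues \eqref{eq:ellip_e} for the boundary poles. Your write-up is in fact more explicit than the paper's sketch, in that you spell out the half-period cancellation \eqref{eq:ellip_b} on the vertical edges and the anti-periodic reduction to \eqref{eq:elliptic_s_integral_formula2}, both of which the paper leaves implicit.
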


The proof of proposition~\ref{prop:ns_residue} is almost the same as that of proposition~\ref{prop:cosh_residue}.
\begin{proof}
%
%
\begin{figure}[htbp]
\centering
\includegraphics[scale=0.5]{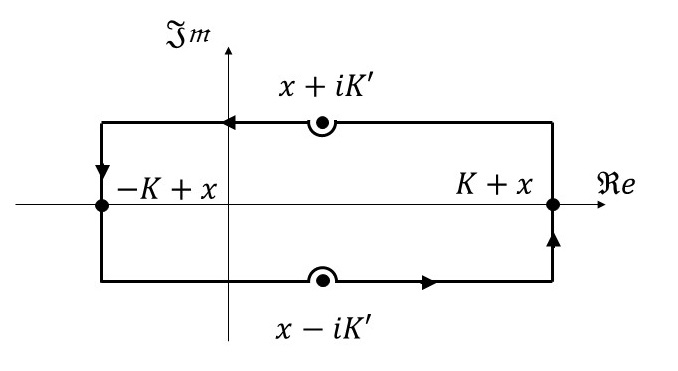}
\caption{Integration path $\Gamma$ for elliptic singular integral}
\label{fig:contour_sn}
\end{figure}
%
%
%
We consider the contour integral 
\begin{align*}
\oint_\Gamma k\,\sn(x-z,k) U(z)\, dz=0,
\end{align*}
where the contour $\Gamma$ is shown in Fig.~\ref{fig:contour_sn} and there is no singularity inside the contour its value is zero.
Then, using the properties of $\sn(z,k)$ \eqref{eq:ellip_c}, we find
\begin{align*}
    k\, \sn (y\pm \ii K',k)=\ns (y, k),
\end{align*}
and using \eqref{eq:ellip_e}, we obtain \eqref{eq:elliptic_s_integral_formula} and \eqref{eq:elliptic_s_integral_formula2}.
\end{proof}

\begin{rem}
  When $U(z^*)=U(z)^*$, by putting $U(x+\ii K')=w(x)+\ii v(x)$, we have
   \begin{align}
       w(x)=\frac{1}{\pi}\dashint_{x-K}^{x+K}\ns(x-y,k)v(y)dy+C(U)
   \end{align}
   where
   \begin{align}
       C(U)&=-\frac{1}{2\pi}\int_{-K'}^{K'}k\,\sn(-K-\ii\eta,k)\left\{U(x+K+\ii\eta)+U(x-K+\ii\eta)\right\}\, d\eta \nonumber\\
       &=\frac{1}{2\pi}\int_{-K'}^{K'}\frac{k}{\dn(\eta,k')}\left\{U(x+K+\ii\eta)+U(x-K+\ii\eta)\right\}\, d\eta.
   \end{align}
   Here we used the relation $\sn(-K-\ii\eta,k)=-\dfrac{1}{\dn(\eta,k')}$.
\end{rem}

Since $\dn (z,k')$ is an even function, similar to the corollary~\ref{cor:cosh_residue}, we immediately find the following corollary.
\begin{cor}\label{cor:elliptic_relation}
    If $U(z)$ satisfies 
    \begin{align}
        &U(z^*)=U(z)^*
    \end{align}
    then, by putting $U(x+\ii K'):=w(x)+\ii v(x)$, we have 
    \begin{align}
      w(x)&=\frac{1}{\pi}\dashint_{x-K}^{x+K}\ns(x-y,k)v(y)dy +C(x)  \\
      C(x)&:=\frac{k}{\pi}\int_0^{K'}\frac{1}{\dn(\eta,k')}\Re\left[U(x+K+\ii\eta)+U(x-K+\ii\eta)\right]
    \end{align}
    In particular, if $U(z+2K)=-U(z)$, $C(x)=0$.
\end{cor}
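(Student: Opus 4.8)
The plan is to obtain Corollary~\ref{cor:elliptic_relation} as a direct specialization of Proposition~\ref{prop:ns_residue}, exactly mirroring the passage from Proposition~\ref{prop:cosh_residue} to Corollary~\ref{cor:cosh_residue}. I would start from the master identity \eqref{eq:elliptic_s_integral_formula} and impose the reality condition $U(z^*)=U(z)^*$. Writing $U(x+\ii K')=w(x)+\ii v(x)$ with $w,v$ real, this condition gives $U(x-\ii K')=w(x)-\ii v(x)$, and more generally $U(\zeta^*)=U(\zeta)^*$ at every point. Consequently the combinations appearing in \eqref{eq:elliptic_s_integral_formula} collapse: the difference $U(y-\ii K')-U(y+\ii K')$ becomes $-2\ii v(y)$ under the first integral, while the right-hand side $U(x-\ii K')+U(x+\ii K')$ becomes $2w(x)$. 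Dividing the resulting identity by $-2\ii$ and using $\sn(-K-\ii\eta,k)=-1/\dn(\eta,k')$ to rewrite the boundary integral reproduces precisely the formula of the preceding remark, namely $w(x)=\frac{1}{\pi}\dashint_{x-K}^{x+K}\ns(x-y,k)\,v(y)\,dy+C(U)$ with $C(U)$ the integral over $[-K',K']$.

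The only new content is the reduction $C(U)=C(x)$, and here the single idea is to \emph{fold} the $\eta$-integral using the evenness of $\dn$. I would split $\int_{-K'}^{K'}$ into $\int_{-K'}^{0}$ and $\int_{0}^{K'}$, substitute $\eta\mapsto-\eta$ in the first piece, and invoke $\dn(-\eta,k')=\dn(\eta,k')$. Setting $g(\eta):=U(x+K+\ii\eta)+U(x-K+\ii\eta)$, the reflected piece contributes $g(-\eta)/\dn(\eta,k')$, so the two halves combine into $\int_{0}^{K'}\bigl(g(\eta)+g(-\eta)\bigr)/\dn(\eta,k')\,d\eta$. The step I would watch most carefully is the identification $g(-\eta)=g(\eta)^*$: since $(x\pm K+\ii\eta)^*=x\pm K-\ii\eta$ for real $x,K,\eta$, the reality condition $U(z^*)=U(z)^*$ gives $U(x\pm K-\ii\eta)=U(x\pm K+\ii\eta)^*$, and summing the two shifts yields $g(-\eta)=g(\eta)^*$. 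Hence $g(\eta)+g(-\eta)=2\Re g(\eta)$, and with the prefactor $\frac{k}{2\pi}$ this is exactly $C(x)=\frac{k}{\pi}\int_{0}^{K'}\dn(\eta,k')^{-1}\Re\!\bigl[U(x+K+\ii\eta)+U(x-K+\ii\eta)\bigr]\,d\eta$.

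Finally, for the special case I would argue pointwise rather than through cancellation of integrals. If $U(z+2K)=-U(z)$ as in \eqref{eq:U_antisym}, then taking $z=x-K+\ii\eta$ gives $U(x+K+\ii\eta)=U(z+2K)=-U(z)=-U(x-K+\ii\eta)$, so the bracket $U(x+K+\ii\eta)+U(x-K+\ii\eta)$ vanishes identically in $\eta$. Therefore its real part is zero and $C(x)=0$, consistent with the fact that this hypothesis already collapses \eqref{eq:elliptic_s_integral_formula} to the cleaner identity \eqref{eq:elliptic_s_integral_formula2}. No step here presents a genuine obstacle; the argument is purely symmetry bookkeeping, and the only place to be careful is tracking the factors of $\ii$ and the sign conventions so that the folded integrand emerges as the real part with the correct constant $\frac{k}{\pi}$.
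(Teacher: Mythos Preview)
Your proposal is correct and follows precisely the route the paper intends: the paper offers no detailed proof of this corollary, stating only that ``since $\dn(z,k')$ is an even function, similar to the corollary~\ref{cor:cosh_residue}, we immediately find the following corollary,'' and your argument is exactly the unpacking of that sentence---specialize \eqref{eq:elliptic_s_integral_formula} under $U(z^*)=U(z)^*$, then fold the $\eta$-integral via $\dn(-\eta,k')=\dn(\eta,k')$ and $g(-\eta)=g(\eta)^*$ to extract the real part. Your treatment of the anti-periodic case is likewise identical in spirit to \eqref{eq:elliptic_s_integral_formula2}.
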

\begin{rem}
    In the limit $k \rightarrow 1$, 
    \begin{align*}
        K \to +\infty,\quad K'\to \frac{\pi}{2},\quad \ns(z) \to \coth z
    \end{align*}
    Hence proposition~\ref{prop:ns_residue} is a generalisation of the proposition~\ref{prop:cosh_residue}.
\end{rem}

\subsection{modified KdV-type equation with elliptic singular integral}
We can construct integrable integro-differential equations with elliptic integral in the same way as those with $\coth$ singularity.
As an example, we consider the modified KdV type integro-differential equation:
\begin{align}
    &v_t+v_{3x}+2\left\{\left(3(\tilde{T_k}v)^2-v^2\right)v\right\}_x=0 \label{eq:elliptic_mKdV}\\
    &\tilde{T_k}v(x):=\frac{1}{\pi}\dashint_{x-K}^{x+K}\ns(x-y)v(y)\,dy
\end{align}
Suppose that a complex function $U(z,t)$ is holomorphic in $-K' \le \Im{z} \le K'$ and satisfies the condition:
\begin{align}\label{eq:U_ellip_constraint}
     U(z+2K)=-U(z).
\end{align}
Note that we do not require that $U(z^*)=U(z)^*$.
If $U(z,t)$ is a solution of the complex modified KdV equation
\begin{align}\label{eq:cm_el_mKdV}
    U_t+U_{3z}-2(U^3)_z=0,
\end{align}
then, putting $v(x,t)=\frac{U(x+\ii\delta)-U(x-\ii\delta)}{2\ii}$ and 
$w(x,t)=\frac{U(x-\ii\delta)+U(x+\ii\delta)}{2}$, we find
\begin{align*}
    v_t+v_{3x}-2\left\{(3w^2-v^2)v\right\}_x=0.
\end{align*}
From Proposition~~\ref{prop:ns_residue}, $w(x,t)=\tilde{T_k}v(x,t)$ and we obtain 
\eqref{eq:elliptic_mKdV}. 

We consider a complex function $F(z,t)$ which is holomorphic and satisfies the periodic condition: 
\begin{align}
    F(z+4K,t)=F(z,t).
\end{align}
By variable transformation 
\begin{align*}
 U(z,t)=\left(\log \frac{F(z+K,t)}{F(z-K,t)}\right)_z   
\end{align*}
we find that $U(z,t)$ satisfies the constraint \eqref{eq:U_ellip_constraint}, and becomes a solution of \eqref{eq:elliptic_mKdV} when it satisfies the Hirota bilinear equation for mKdV equation \cite{Hirota1972}:
\begin{align}
 &\left(D_t+D_z^3\right)F_+\cdot F_-=0 \label{eq:mKdV_Hirota1}\\
 &D_z^2 F_+\cdot F_-=0 \label{eq:mKdV_Hirota2},
\end{align}
where $F_+:=F(z+K)$ and $F_-:=F(z-K)$.
A simple non-trivial solution of \eqref{eq:mKdV_Hirota1} and \eqref{eq:mKdV_Hirota2} is
\begin{align}
    F(z,t)&=\e^{\frac{\ii}{2} (k_m z+k_m^3 t+\phi+\ii\theta) }+ \e^{-\frac{\ii}{2} (k_m z+k_m^3 t+\phi+\ii\theta) }\label{eq:el_mKdV_1soliton}\\
    k_m&=\frac{(2m+1)\pi}{2K} \quad (m \in \bZ), \quad \phi,\,\theta \in \bR \label{eq:el_mKdV_wavenumber},
\end{align}
which gives 
\begin{align*}
    U(z)=\dfrac{(-1)^{m+1}k_m}{\cos \left(k_mz+k_m^3 t+\phi+\ii\theta\right)}
\end{align*}
Since $U(z)$ is holomorphic in $-K'\le \Im{z} \le K'$, $|K'k_m|<|\theta|$ is required.
Thus we find a complex solution of \eqref{eq:elliptic_mKdV} as
\begin{align}
    v(x,t)&=\frac{(-1)^{m+1}k_m}{2\ii}\left[\frac{1}{\cos(k_m x+k_m^3 t+\phi+\ii(\theta+K'))}-\frac{1}{\cos(k_m x+k_m^3 t+\phi+\ii(\theta-K'))} \right]
    \nonumber \\
    &=
    \dfrac{(-1)^{m+1}k_m \sin (k_m x+k_m^3t+\phi+\ii \theta)\sinh (k_mK')}{\cos^2(k_m x+k_m^3t+\phi+\ii\theta)+\sinh^2(k_mK')}.
\end{align}
\subsection{Double-Integro-difference equation}
The above procedures are easily extended to difference equations or double singular integral cases. 
For example, though it is fairly artificial,  we can consider a double integro-difference equation, 
\begin{align}
&\frac{1}{v_{n+2}^{t+2}(x,y)}+\frac{1}{v_{n+1}^{t+2}(x,y)}+\frac{1}{v_{n+2}^{t+1}(x,y)}-\Delta T_{y,x}\left(v_{n+1}^{t+2}(x,y)+v_{n+2}^t(x,y) +v_n^{t+1}(x,y)\right)\nonumber \\
&=\frac{1}{v_{n}^{t+1}(x,y)}+\frac{1}{v_{n+1}^{t}(x,y)}+\frac{1}{v_{n}^{t}(x,y)}-\Delta T_{y,x}\left(v_{n+2}^{t+1}(x,y)+v_{n}^{t+2}(x,y) +v_{n+1}^t(x,y)\right),
\label{eq:dint_dKdV}
\end{align}
where $\Delta$ ($0<\Delta<1$) is a constant, 
\begin{align}
 T_{y,x}(u)(x,y)&:= T_y(T_x u(x,y)), 
\end{align}
and  $T_x$ ($T_y$) is the singular integral operator with respect to $x$ ($y$) given in \eqref{def:singular_T}.
This equation is related to the discrete KdV equation \cite{Hirota-Tsujimoto}:
\begin{align}\label{eq:discrete_KdV}
    \frac{1}{v_{n+1}^{t+1}}-\frac{1}{v_n^t}+\Delta\left(v_n^{t+1}-v_{n+1}^t\right)=0.
\end{align}

To obtain solutions of \eqref{eq:dint_dKdV}, we use the following lemma.
%
%
\begin{lem}\label{lem:double_int}
  Suppose that $U(z,w)$ is holomorphic in $-\delta \le \Im{z} \le \delta$, $-\epsilon \le \Im{w}\le \epsilon$, and $\lim_{\Re{z} \to \pm\infty}U(z,w)=\mbox{const}$, and $\lim_{\Re{w} \to \pm\infty}U(z,w)=U^y_{\pm\infty}$ do not depend on $z$, then,
  \begin{align}
      &U(x+\ii\delta,y+\ii\epsilon)+U(x+\ii\delta,y-\ii\epsilon)+U(x-\ii\delta,y+\ii\epsilon)+U(x-\ii\delta,y-\ii\epsilon)-2(U^y_{\infty}+U^y_{-\infty})\nonumber \\
      &=-T_{y,x}\left[ U(x+\ii\delta,y+\ii\epsilon)-U(x+\ii\delta,y-\ii\epsilon)-U(x-\ii\delta,y+\ii\epsilon)+U(x-\ii\delta,y-\ii\epsilon) \right]\label{eq:double_int_lem}
  \end{align}
\end{lem}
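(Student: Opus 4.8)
The plan is to reduce the two-dimensional identity \eqref{eq:double_int_lem} to two successive applications of the one-variable residue identity of Proposition~\ref{prop:cosh_residue}, one in the strip variable $z$ and one in the strip variable $w$. This is legitimate because $U(z,w)$ is jointly holomorphic, so its restriction to a fixed boundary value of one variable is holomorphic (and H\"older on the boundary) in the other, and because the operators $T_x$ and $T_y$ act on disjoint variables and therefore commute, giving $T_{y,x}=T_yT_x$. Throughout I abbreviate the four corner values as $U^{\pm\pm}:=U(x\pm\ii\delta,\,y\pm\ii\epsilon)$.

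First I would freeze the $z$-boundary and form $G(w):=U(x+\ii\delta,w)+U(x-\ii\delta,w)$, which is holomorphic in the strip $-\epsilon<\Im w<\epsilon$. Since $\lim_{\Re w\to\pm\infty}U(z,w)=U^y_{\pm\infty}$ independently of $z$, the function $G$ carries the $w$-limits $2U^y_{\pm\infty}$, and Proposition~\ref{prop:cosh_residue} applied in $w$ gives
\begin{align*}
G(y+\ii\epsilon)+G(y-\ii\epsilon)-2U^y_\infty-2U^y_{-\infty}
=-\tfrac{1}{\ii}\,T_y\!\left[G(\,\cdot+\ii\epsilon)-G(\,\cdot-\ii\epsilon)\right](y).
\end{align*}
The left-hand side is exactly the left-hand side of \eqref{eq:double_int_lem}, namely $U^{++}+U^{+-}+U^{-+}+U^{--}-2(U^y_\infty+U^y_{-\infty})$, so it remains only to rewrite the $T_y$-integrand.

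Second, for each fixed $y'$ I would apply Proposition~\ref{prop:cosh_residue} in $z$ to $H_{y'}(z):=U(z,y'+\ii\epsilon)-U(z,y'-\ii\epsilon)$, observing that $G(\,\cdot+\ii\epsilon)-G(\,\cdot-\ii\epsilon)$ evaluated at $y'$ equals $H_{y'}(x+\ii\delta)+H_{y'}(x-\ii\delta)$. The crucial point is that the $z$-limits of $H_{y'}$ vanish: because $\lim_{\Re z\to\pm\infty}U(z,w)$ is a constant independent of $w$, the two $w$-values subtracted in $H_{y'}$ have equal $z$-limits, so the constant terms drop out and Proposition~\ref{prop:cosh_residue} yields
\begin{align*}
H_{y'}(x+\ii\delta)+H_{y'}(x-\ii\delta)
=-\tfrac{1}{\ii}\,T_x\!\left[U(\,\cdot+\ii\delta,y'+\ii\epsilon)-U(\,\cdot+\ii\delta,y'-\ii\epsilon)-U(\,\cdot-\ii\delta,y'+\ii\epsilon)+U(\,\cdot-\ii\delta,y'-\ii\epsilon)\right](x).
\end{align*}
Substituting this into the previous display and using $(-1/\ii)^2=-1$ together with $T_{y,x}=T_yT_x$ produces precisely $-T_{y,x}$ of the antisymmetric combination $U^{++}-U^{+-}-U^{-+}+U^{--}$, which is \eqref{eq:double_int_lem}.

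The routine work — checking that the restricted functions $G$ and $H_{y'}$ inherit holomorphy and H\"older continuity so that Proposition~\ref{prop:cosh_residue} applies, and that the principal-value integrals in $x$ and $y$ may be interchanged so $T_x$ and $T_y$ commute — is mild. I expect the only genuinely delicate point to be the bookkeeping of the boundary constants: one must verify that the \emph{sum} $G$ carries the $w$-limits $2U^y_{\pm\infty}$ while the \emph{difference} $H_{y'}$ carries vanishing $z$-limits, since it is exactly this sum-versus-difference asymmetry that makes the single residue constant $-2(U^y_\infty+U^y_{-\infty})$ survive while all mixed corner limits cancel. An alternative, more laborious route would be to evaluate the single double contour integral $\oint\!\oint \frac{1}{4\delta\epsilon}\tanh[\frac{\pi}{2\delta}(x-z)]\tanh[\frac{\pi}{2\epsilon}(y-w)]U(z,w)\,dw\,dz$ by iterated residues, mimicking the proof of Proposition~\ref{prop:cosh_residue} in both variables simultaneously; the iterated use of the already-established one-variable identity is the more economical option.
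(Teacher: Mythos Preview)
Your proposal is correct and follows essentially the same strategy as the paper: reduce \eqref{eq:double_int_lem} to two successive applications of Proposition~\ref{prop:cosh_residue}, introducing the same auxiliary function $G(w)=U(x+\ii\delta,w)+U(x-\ii\delta,w)$ and using that $T$ annihilates constants. The only cosmetic difference is the order of the two applications (the paper applies the $z$-identity first and then the $w$-identity to $G$, whereas you apply the $w$-identity to $G$ first and then the $z$-identity to $H_{y'}$); your treatment of the boundary constants is in fact more explicit than the paper's.
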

 \begin{proof}
 From proposition~\ref{prop:cosh_residue}, we find that
 \begin{align*}
     U(x+\ii\delta,w)+U(x-\ii\delta,w)-U_\infty-U_{-\infty}=\frac{1}{\ii}T_x\left(U(x+\ii\delta,w)-U(x-\ii\delta,w)\right)
 \end{align*}
 If we put $G(w):=U(x+\ii\delta,w)+U(x-\ii\delta,w)$ and use the proposition again,
 noticing that $T_y(U_\infty+U_{-\infty})=0$, we obtain \eqref{eq:double_int_lem}.
 \end{proof}  
%

%
\begin{lem}\label{eq:double_lem2}
   If we put
\begin{align}
  v_n^t(x,y)&=
  \hat{v}(z_n,w_t),\quad z_n:=x+\ii\left(n+\frac{1}{2}\right)\delta,\ w_t:=y+\ii\left(t+\frac{1}{2}\right)\epsilon,   
\end{align}
and assume that $\hat{v}(z,w)$ is holomorphic in both $z$ and $w$ and $\lim_{y \to \pm \infty}\hat{v}(z_n,w_t)=1$, then, Eq.~\eqref{eq:dint_dKdV} holds if $\hat{v}(z_n,w_t)$ satisfies the discrete KdV equation, that is, 
\begin{align}\label{eq:before_dKdV}
   \frac{1}{\hat{v}(z_{n+1},w_{t+1})}- \frac{1}{\hat{v}(z_{n},w_{t})}
  +\Delta\left(\hat{v}(z_{n},w_{t+1})-\hat{v}(z_{n+1},w_{t})\right)=0.
\end{align}
\end{lem}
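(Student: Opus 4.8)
The plan is to read \eqref{eq:dint_dKdV} as an identity among the complex lattice values $v_n^t=\hat v(z_n,w_t)$ and to produce it from two independent ingredients: the discrete KdV relation \eqref{eq:before_dKdV}, which supplies the $1/v$ terms, and the one-dimensional residue identity of Proposition~\ref{prop:cosh_residue}, which converts bare lattice values into the double singular integral $T_{y,x}$. Concretely, I would move everything in \eqref{eq:dint_dKdV} to one side and write it as $Q-\Delta\,T_{y,x}(P)=0$, where $Q$ is the (signed) combination of the six $1/v$ terms and $P$ is the difference of the two three-term arguments of $T_{y,x}$, i.e. $P=v_{n+1}^{t+2}+v_{n+2}^{t}+v_{n}^{t+1}-v_{n+2}^{t+1}-v_{n}^{t+2}-v_{n+1}^{t}$. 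It then suffices to prove separately that $Q=\Delta S$ and that $T_{y,x}(P)=S$ for one and the same bare-lattice combination $S$.

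For the first identity I would evaluate \eqref{eq:before_dKdV} on the four elementary cells of the $3\times3$ block $\{z_n,z_{n+1},z_{n+2}\}\times\{w_t,w_{t+1},w_{t+2}\}$ and add the four relations. The $1/v$ contributions telescope: the two interior copies of $1/v_{n+1}^{t+1}$ cancel and one is left with exactly $Q$. After the same interior cancellation the accompanying $\Delta$-terms collect into $-\Delta S$ with $S=-v_{n}^{t+1}+v_{n+1}^{t}-v_{n+1}^{t+2}+v_{n+2}^{t+1}-v_{n}^{t+2}+v_{n+2}^{t}$, and since each cell relation vanishes identically the total gives $Q-\Delta S=0$, i.e. $Q=\Delta S$. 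This step is routine bookkeeping.

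The second identity, $T_{y,x}(P)=S$, is where the residue formula and the holomorphy of $\hat v$ enter, and it is the main obstacle. Writing $T_{y,x}=T_yT_x$, I would group $P$ by its spectator ($w$-)index and apply Proposition~\ref{prop:cosh_residue} in the $z$ variable to each pair of boundary values; the crucial point is that the half-integer offset in $z_n=x+\ii(n+\tfrac12)\delta$ makes two consecutive lattice heights symmetric about a common horizontal axis, so that $\hat v(z_a,\cdot)$ and $\hat v(z_{a-1},\cdot)$ are precisely the two boundary values of a strip related by $T_x$. Any difference that spans two lattice steps is first rewritten as a sum of consecutive differences before the proposition is applied, and one checks that all the $x$-limit constants (which exist under the standing assumptions inherited from Lemma~\ref{lem:double_int}) cancel in the resulting expression. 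Applying the proposition once more in the $w$ variable, where the stated boundary value $\lim_{y\to\pm\infty}\hat v=1$ guarantees the limits exist and again cancel, reproduces $S$ exactly, the product of the two applications contributing the required overall sign. The delicate part is purely combinatorial: one must verify that the six single terms of $P$, after the two-fold residue reduction, reassemble into the six single terms of $S$ with the correct signs and that every boundary constant drops out. Once this is confirmed, combining $Q=\Delta S$ with $S=T_{y,x}(P)$ yields \eqref{eq:dint_dKdV}.
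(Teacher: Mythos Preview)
Your proposal is correct and uses the same two ingredients as the paper (the discrete KdV relation on the $3\times 3$ block, and the double residue identity of Lemma~\ref{lem:double_int}/Proposition~\ref{prop:cosh_residue}); the paper merely packages the second step more economically by introducing the auxiliary holomorphic function $X(z,w):=\hat v(z,w+\ii\epsilon)-\hat v(z+\ii\delta,w)$, to which Lemma~\ref{lem:double_int} applies in one stroke, and then replaces the left-hand side via the d-KdV identity $X=\tfrac{1}{\Delta}Y$ with $Y(z,w):=\hat v(z,w)^{-1}-\hat v(z+\ii\delta,w+\ii\epsilon)^{-1}$. This eliminates the ``delicate combinatorial'' bookkeeping you anticipate in verifying $T_{y,x}(P)=S$: since $P$ is exactly the alternating sum $X(z_{n+1},w_{t+1})-X(z_{n+1},w_{t})-X(z_{n},w_{t+1})+X(z_{n},w_{t})$ and $-S$ is the corresponding plain sum, the identity follows immediately from Lemma~\ref{lem:double_int}, with the boundary constants vanishing because $\lim_{y\to\pm\infty}X=0$.
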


\begin{proof}
 Let $X(z,w):=\hat{v}(z,w+\ii\epsilon)-\hat{v}(z+\ii\delta,w)$ and $Y(z,w):=\frac{1}{\hat{v}(z,w)}-\frac{1}{\hat{v}(z+\ii\delta,w+\ii\epsilon)}$. 
Since $X(z,w)$ is holomorphic, and $\lim_{y \to \pm\infty}X(z,y+\ii t)=0$, from lemma~\ref{lem:double_int}, we have
\begin{align*}
  &X(z_{n+1},w_{t+1})+X(z_{n+1},w_{t})+ X(z_{n},w_{t+1})+ X(z_{n},w_{t})\\
  &=-T_{y,x}\left(X(z_{n+1},w_{t+1})-X(z_{n+1},w_{t})- X(z_{n},w_{t+1})+ X(z_{n},w_{t})\right)
\end{align*}
When $\hat{v}(z_n,w_n)$ satisfies \eqref{eq:before_dKdV},
$X(z,w)=\frac{1}{\Delta}Y(z,w)$, and we have Eq.~\eqref{eq:dint_dKdV}.
Thus lemma~\ref{eq:double_lem2} is proved.
\end{proof}

Once the relation between Eq.~\eqref{eq:dint_dKdV} and the discrete KdV equation is seen, the solutions of \eqref{eq:dint_dKdV} is nothing but those of the discrete KdV equation. For example its $N$ soliton solutions are given by the following proposition. 
%
%
\begin{prop}\label{prop:double_Nsoliton}
$N$ soliton solutions of Eq.~\eqref{eq:dint_dKdV} is given by the $N \times N$ determinant.    
    \begin{align}
        \hat{v}(z_n,w_t)&:=\dfrac{\tau(z_n,w_t)\tau(z_{n+1},w_{t-1})}{\tau(z_{n+1},w_t)\tau(z_{n},w_{t-1})},\label{eq:prop_double_1}\\
        \tau(z,w)&=\det_{1\le i,j\le N}\left[\delta_{ij}+c_{ij}\e^{k_i z+\omega_i w}\right],\\
        &\e^{k_i}=\frac{\Delta+(1-\Delta)p_i}{1-(1-\Delta)p_i},\quad \e^{\omega_i}=\frac{1-p_i}{p_i}, \quad c_{ij}=\frac{\gamma_j}{p_i+p_j-1},
    \end{align}
    where $\gamma_i$ ($i=1,2,...,N$) are arbitrary constants and $p_i$ ($i=1,2,...,N$) are positive constants with $p_i \ne p_j$ for $i\ne j$.
\end{prop}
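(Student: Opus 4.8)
The plan is to reduce the entire statement to the established soliton theory of the discrete KdV equation. Lemma~\ref{eq:double_lem2} already tells us that if $\hat v(z_n,w_t)$ is holomorphic in both $z$ and $w$ on the relevant strips, satisfies $\lim_{y\to\pm\infty}\hat v(z_n,w_t)=1$, and solves the discrete KdV equation \eqref{eq:before_dKdV}, then Eq.~\eqref{eq:dint_dKdV} holds automatically. Thus it suffices to check (i) the analytic hypotheses of the lemma for the determinant ansatz \eqref{eq:prop_double_1}, and (ii) that this $\hat v$ does solve \eqref{eq:before_dKdV}; the passage back to \eqref{eq:dint_dKdV} is then a pure invocation of the lemma.

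For (i), the tau function $\tau(z,w)=\det_{1\le i,j\le N}[\delta_{ij}+c_{ij}\e^{k_iz+\omega_iw}]$ is a finite sum of exponentials, hence entire in $(z,w)$, so $\hat v$ is holomorphic wherever the product $\tau(z_{n+1},w_t)\tau(z_n,w_{t-1})$ in its denominator is nonzero. The conditions $p_i>0$, $p_i\ne p_j$ make $\tau$ strictly positive for real arguments, exactly as for ordinary Casoratian soliton tau functions; choosing the strip widths $\delta,\epsilon$ small enough then keeps the zeros of $\tau$ off the strips, in the same spirit as the condition $0<k_j\delta<\pi$ used earlier. For the normalisation, note that $\Re z_n=x$ and $\Re w_t=y$ for every $n,t$, so in all four tau factors the dominant term of the multilinear expansion (as $y\to\pm\infty$) is selected by the same subset of indices; the corresponding leading exponentials cancel identically between numerator and denominator, giving $\hat v\to1$.

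The core of the argument is (ii). I would substitute $\hat v(z_n,w_t)=\tau(z_n,w_t)\tau(z_{n+1},w_{t-1})/[\tau(z_{n+1},w_t)\tau(z_n,w_{t-1})]$ into \eqref{eq:before_dKdV} and clear denominators, turning the discrete KdV equation into a single Hirota bilinear identity among the lattice-shifted tau functions, of the standard discrete-KdV type in \cite{Hirota-Tsujimoto}. Writing $\tau=\det(I+M)$ with $M_{ij}=c_{ij}\e^{k_iz+\omega_iw}$, the shifts $z_n\mapsto z_{n\pm1}$ and $w_t\mapsto w_{t\pm1}$ act on $M$ by multiplying its $i$-th row by $\e^{\pm k_i}$ or $\e^{\pm\omega_i}$; because $c_{ij}=\gamma_j/(p_i+p_j-1)$ endows $M$ with a Cauchy-type structure, the resulting quadratic combination of four such determinants collapses to an identity of Jacobi (Plücker) type. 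The dispersion relations $\e^{k_i}=\frac{\Delta+(1-\Delta)p_i}{1-(1-\Delta)p_i}$ and $\e^{\omega_i}=\frac{1-p_i}{p_i}$ are exactly the conditions under which each plane wave $\e^{k_iz+\omega_iw}$ solves the linear part of the bilinear equation, so that the identity closes.

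The step I expect to be the main obstacle is this last determinant verification: confirming that, after performing all the shifts, the quadratic combination of tau functions is precisely a Plücker relation, and that the rational expressions for $\e^{k_i}$, $\e^{\omega_i}$ and $c_{ij}$ in terms of $p_i$ are the unique ones for which the off-diagonal interaction terms cancel. This is algebra rather than analysis and coincides with the known $N$-soliton computation for the bare discrete KdV equation; once it is in place, Lemma~\ref{eq:double_lem2} transfers the solution to the double-integro-difference equation \eqref{eq:dint_dKdV} with no further work.
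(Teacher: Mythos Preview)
Your approach is correct and matches the paper's: the paper's proof simply cites Ref.~\cite{Hirota-Tsujimoto} for the fact that the determinant tau function solves the discrete KdV equation, notes that $\lim_{y\to\pm\infty}\hat v(z,y+\ii t)=1$, and implicitly relies on Lemma~\ref{eq:double_lem2} to transfer the solution to \eqref{eq:dint_dKdV}. Your outline of the bilinear/Pl\"ucker verification is more explicit than the paper's, but it is exactly the content being delegated to the reference.
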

\begin{proof}
The proof of proposition~\ref{prop:double_Nsoliton} is the same as that for the discrete KdV equation and is found e.g. in Ref.\cite{Hirota-Tsujimoto}.
Note that $\lim_{y\to\pm \infty}\hat{v}(z,y+\ii t)=1$.
\end{proof}

\begin{rem}
The following proposition is known to hold:
\begin{prop}\label{prop:generalisation_holomorphic}
Let \( D \) be a simply connected domain,
\( u: \partial D \to \mathbb{R} \) be a piecewise continuous function and 
\( z_0, a \in D \). 

Then, there exists a function \( K(z; z_0, a) = H(z;z_0,a) + \ii G(z;z_0)\)
given that
\(H(z;z_0,a)=\log\frac{z-z_0}{z-a} + \phi(z)\),
where \(\phi\) is holomorphic in \(D\) and satisfies \( \Im H = const \) along with \(\partial D\),
and \(G(z;z_0) = -\ii \log(z-z_0) + \psi(z)\),
where \(\psi\) is holomorphic in \(D\) and
satisfies \(\Im G = const\) along with \(\partial D\).

The \(K\) acts as an \(\) integral kernel such that
\begin{align}
U(z_0) &= \Re U(a) + \frac{1}{2\pi}\int_{\partial D} u(z) \, dK
\end{align}
becomes a holomorphic function in \( D \). Moreover, for a continuous point \( z_0' \) of \( u \),
\begin{align}
\lim_{z_0 \in D \to z_0' \in \partial D} \Im U(z_0) &= u(z_0')
\end{align}
and
\begin{align}
\lim_{z_0 \in D \to z_0' \in \partial D} \Re U(z_0) = \Re U(a) + \frac{1}{2\pi}  \int_{\partial D}u(z) \Re dH
\end{align}
hold.
\end{prop}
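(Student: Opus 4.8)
The plan is to recognize this proposition as the Schwarz--Poisson integral representation for the simply connected domain $D$, and to build the kernel $K$ out of the complex Green's function of $D$ together with its harmonic conjugate. First I would construct $G$. Writing $-\ii\log(z-z_0)=\arg(z-z_0)-\ii\log|z-z_0|$, the requirement that $\Im G=\mathrm{const}$ on $\partial D$ amounts to solving the Dirichlet problem for the harmonic function $\Im\psi$ with boundary data $\log|z-z_0|$; since $D$ is simply connected the harmonic conjugate $\Re\psi$ is single-valued, so $\psi$, and hence $G$, exists and is unique up to an additive real constant. In fact $-\Im G$ is the classical Green's function of $D$ with logarithmic pole at $z_0$, and $\Re G$ is its (multivalued) conjugate. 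Because $z_0\in D$, traversing $\partial D$ once increases $-\ii\log(z-z_0)$ by $2\pi$ while $\psi$ returns to its value, so $\oint_{\partial D} dG = 2\pi$.

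Next I would construct $H$ by the same method: $\Im H=\mathrm{const}$ on $\partial D$ requires solving the Dirichlet problem with boundary data $\arg(z-z_0)-\arg(z-a)$. The crucial point is that, since both $z_0$ and $a$ lie in $D$, this datum has zero total winding around $\partial D$ and is therefore a genuine single-valued continuous function, so the problem is well posed and admits a single-valued conjugate. This is exactly why $H$ carries two logarithmic poles rather than one: the pole at $a$ cancels the period of the pole at $z_0$, allowing $\phi$ to exist. It also pins down the additive constant, since at $z_0=a$ one has $H=\phi=\mathrm{const}$, whence $dH=0$ there and the formula reduces consistently to $\Re U(a)=\Re U(a)$.

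Now comes the key simplification on the boundary. Because $\Im H$ and $\Im G$ are constant along $\partial D$, their tangential differentials vanish, $\Im\,dH=\Im\,dG=0$ on $\partial D$, so $dK=dH+\ii\,dG$ obeys $\Re\,dK=\Re\,dH$ and $\Im\,dK=\Re\,dG$ there. For real $u$ this gives $\int_{\partial D} u\,dK=\int_{\partial D}u\,\Re\,dH+\ii\int_{\partial D}u\,\Re\,dG$, which already produces the stated real part $\Re U(z_0)=\Re U(a)+\tfrac{1}{2\pi}\int u\,\Re\,dH$ for every interior $z_0$ and reduces the imaginary part to $\Im U(z_0)=\tfrac{1}{2\pi}\int u\,\Re\,dG$. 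Holomorphy of $U$ in $z_0$ then follows by differentiating under the integral, since $K(z;z_0,a)$ is holomorphic in $z_0$ for each $z\in\partial D$.

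The hard part will be the boundary limit. The one-form $\tfrac{1}{2\pi}\Re\,dG$ is nonnegative and, by $\oint dG=2\pi$, has total mass $1$ on $\partial D$; it is precisely the harmonic measure (Poisson kernel) of $D$ seen from $z_0$. Thus $\Im U(z_0)\to u(z_0')$ is the assertion that this measure concentrates at $z_0'$ as $z_0\to z_0'$, i.e. that $\tfrac{1}{2\pi}\Re\,dG$ is an approximate identity, giving the limit at any continuity point of the piecewise continuous $u$; the real part limit is immediate from the previous paragraph. I expect the main obstacle to be this concentration argument together with the rigorous solvability of the two Dirichlet problems, both requiring regularity of $\partial D$. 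The cleanest way to discharge all of it is to invoke the Riemann mapping theorem: mapping $D$ conformally onto the unit disk $\mathbb{D}$ with $z_0\mapsto 0$ turns $G$ into $-\ii\log w$ and $\tfrac{1}{2\pi}\Re\,dG$ into $\tfrac{1}{2\pi}\,d\theta$, so the identity collapses to the classical Schwarz integral formula on $\mathbb{D}$, whose boundary behaviour at continuity points of the data is standard; pulling back by the conformal map recovers the assertion for general $D$.
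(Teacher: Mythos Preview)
The paper does not prove this proposition. It is stated inside a Remark with the preamble ``The following proposition is known to hold,'' and the paper then simply specializes it to the unit disk, writing down the classical Schwarz integral formula and its boundary limit there. No argument is given for general $D$.

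Your sketch is the standard way to establish the result and is sound. Identifying $-\Im G$ with the Green's function, $\tfrac{1}{2\pi}\Re\,dG$ with harmonic measure, and using the auxiliary pole at $a$ to kill the period of $\log(z-z_0)$ so that $H$ is single-valued are exactly the right moves; the reduction $dK=\Re\,dH+\ii\,\Re\,dG$ on $\partial D$ is what makes the stated real-part formula immediate. Your closing reduction via the Riemann map to the Schwarz formula on $\mathbb{D}$ is in fact precisely the special case the paper writes out, so your argument and the paper's example dovetail. The only genuine caveat---which you already flag---is that the proposition as stated imposes no regularity on $\partial D$, whereas boundary continuity of the conformal map (Carath\'eodory) and the pointwise boundary limit of harmonic measure both need some hypothesis such as a Jordan boundary; the paper is silent on this, so your proof is at least as careful as the source.
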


For example, inside the unit circle \(D = \{z|\; |z| < 1\}\): 
    \(U\) is represented as
    \begin{align}
        U(z) = \Re U(0) + \frac{\ii}{2\pi}\int_0^{2\pi}\frac{1+z\e^{-\ii \phi}}{1-z\e^{-\ii \phi}}\Im U(\e^{\ii\phi})d\phi,
    \end{align}
    and the boundary value on \( |z| = 1\) is represented as
    \begin{align}
        U(\e^{\ii \theta}) = \Re U(0) + \frac{1}{2\pi}\int_0^{2\pi}(\Im U(\e^{\ii \phi}) -  \Im U(\e^{\ii \theta}))\cot(\frac{\phi - \theta}{2})d\phi.
    \end{align}   
Therefore we can consider integro-differential equations over many kinds of complex domains.
\end{rem}

\section{Application to mathematical modeling for traffic flow}\label{sect:traffic_flow}

In this section, we consider application of integro-differential equations to a mathematical model for traffic flow.
A celebrated model for traffic flow is the Burgers equation
\begin{subequations}
\begin{align}\label{eq:Burgers}
    \rho_t(x,t)&=-\left[v_l(\rho(x,t))\rho(x,t) -D\rho_x(x,t) \right]_x,\\
   v_l(\rho(x,t))&:=V_m\left(1-\frac{\rho(x,t)}{\rho_m}\right) 
\end{align}    
\end{subequations}
where $\rho(x,t)$ is the car density at position $x$ and time $t$, $V_m$ is the maximum car velocity, $\rho_m$ is the maximum car density and $D$ is the diffusion constant which expresses fluctuation of car density. 
The term $v_l(\rho)$ denotes the average car velocity depending on the car density, we generalise it by taking into non-locality as 
\begin{align}
   \rho_t &=-2A\rho_x T(\rho_x) -V_c\rho_x +D\rho_{2x} \label{eq:traffic_coth},
\end{align}
where \(A = \frac{V_m \delta}{\rho_m} \)\cite{Satsuma-Tomoeda}\cite{H-Nukaya-Satsuma-Tomoeda},
or its further generalisation 
\begin{subequations}
\begin{align}\label{eq:traffic_ellip}
    \rho_t(x,t)&=-2A\rho_x T_k(\rho_x) -V_c\rho_x +D\rho_{2x}(x,t) \\ 
    T_k(v)&:=-\frac{1}{2\delta}\dashint_{x-\frac{2\delta K}{\pi}}^{x+\frac{2\delta K}{\pi}}\ns\left(\frac{\pi}{2\delta}(x-y), k\right)\, v(y)\,dy
\end{align}    
\end{subequations}
In fact, by assuming the fixed boundary condition for car density as 
\begin{align*}
\lim_{x \to \pm\infty}\rho(x,t)=\rho(\pm\infty),    
\end{align*}
and taking $\delta \to +0$ limit, we have
\begin{align*}
    \lim_{\delta \to +0} -2AT(\rho_x) &=\frac{V_m}{\rho_m}\dashint_{-\infty}^\infty \sgn(x-y)\rho_y(y,t)\, dt \\
    &=\frac{V_m}{\rho_m}\left\{ 2\rho(x,t)-\rho(\infty)-\rho(-\infty)\right\}.
\end{align*}
Hence if we put
\begin{align*}
    V_c=V_m\left(1-\frac{\rho(\infty)+\rho(-\infty)}{\rho_m}\right),
\end{align*}
we find
\begin{align*}
    &\lim_{\delta \to +0}-2A\rho_x T(\rho_x) -V_c\rho_x +D\rho_{2x}\\
    &=-\left[v_l(\rho(x,t))\rho(x,t) -D\rho_x(x,t) \right]_x. 
\end{align*}
Furthermore, 
\begin{align*}
    \lim_{k \to 1-0}T_k(v)=T(v).
\end{align*}
 Therefore \eqref{eq:traffic_coth} and \eqref{eq:traffic_ellip} are generalisation of the Burgers equation model \eqref{eq:Burgers}.

\subsection{Solutions for Eq.\eqref{eq:traffic_coth}}
To consider solutions for Eq.\eqref{eq:traffic_coth}, 
we set 
\begin{align}
    v(x,t):=\rho_x(x,t) \label{eq:rho_to_v}
\end{align}
Then, we have 
\begin{align}
   v_t=-\left[2Av T(v)\right]_x-V_cv_x +Dv_{2x}.
    \label{eq:coth_sing_v}
\end{align}
Let us consider a complex Burgers equation:
\begin{align}
    U_t(z)&=\left[AU^2(z)-(V_c+A(U_\infty+U_{-\infty})) U(z)+DU_z(z)\right]_z, \label{eq:comp_traffic_Burgers}\\
    U_{\pm \infty}&:=\lim_{\Re(z) \to \pm \infty} U(z)
\end{align}
From proposition~\ref{prop:cosh_residue}, a holomorphic function $U(z)$ which satisfies
$U(z^*)=U(z)^*$ and $U_\infty+U_{-\infty}$ is a real constant gives
\begin{align*}
    U(x+\ii\delta)=-T(v)(x)+\frac{U_\infty+U_{-\infty}}{2}+\ii v(x).
\end{align*}
From this relation and Eq.\eqref{eq:comp_traffic_Burgers}, we obtain Eq. \eqref{eq:coth_sing_v}.
Hence, $\Im \left[U(x+\ii\delta)\right]$ gives a solution of Eq.~\eqref{eq:coth_sing_v}.

Next, let's find specific solutions and discuss traffic flow phenomena.
Eq.\eqref{eq:comp_traffic_Burgers} is linearized through the Cole-Hopf transformation
\begin{align}
    U = \frac{D}{A}\frac{f_z}{f},
\end{align}
into the diffusion equation
\begin{align}
    f_t = Df_{zz} -(V_c +A(U_{-\infty}+U_\infty)) f_z.
\end{align}

If the dispersion relation
\begin{align}
    \omega &= Dk^2 -(V_c +A(U_{-\infty}+U_\infty))k\\
           &= Dk^2 -\left(V_m\left(1-\frac{\rho(\infty)+\rho(-\infty)}{\rho_m}\right)+A(U_{-\infty}+U_\infty)\right)k
\end{align} 
holds 
then \(f=\e^{kz + \omega t}\) is a solution for the diffusion equation.

Among linear superpositions of \(f\), we consider the form as
\begin{align}
    f(z,t) = 1 + \sum_{j=1}^N \e^{k_jz + \omega_j t + a_j},
\end{align}
where \(0<k_1<k_2<\cdots < k_N\) and \(a_1,a_2,\cdots,a_N\) are real constants.
Furthermore, to ensure holomorcity, 
we assume that for any \( j = 1, \ldots, N \), it holds that \( 0 < k_j\delta < \pi \).

Therefore 
\begin{align}
    \rho(x,t) &= \frac{D}{2A\ii}\log \frac{f(x+\ii \delta,t)}{f(x-\ii \delta,t)}\\
      &= \frac{D}{A} \arg f(x+\ii \delta,t)\\
      &=\frac{D}{A} \arctan\frac{\Im f(x+\ii \delta,t)}{ \Re f(x+\ii \delta,t)},
\end{align}
with \(\rho(-\infty,t)=0\).

\subsection{Discussion on Traffic Flow}

Let's first consider the solution for the case of \( N = 1 \).
For \( k > 0 \) and real constant \(a\), the solution is derived from \( f(x,t) = 1 + e^{kx + \omega t + a} \). 
This yields
\begin{align}
\rho(x,t) = \frac{D}{A}\arctan\frac{e^{kx+\omega t + a}\sin k\delta}{1 + e^{kx+\omega t + a}\cos k\delta}
\end{align}
The dispersion relation, given \( U_{-\infty} = \rho(-\infty) = 0 \), \( U_{\infty} = \frac{Dk}{A} \), and \( \rho(\infty) = \frac{D}{A}k\delta \), is \( \omega = -kV_m\left(1- \frac{\rho(\infty)}{\rho_m}\right) \).

\begin{figure}[bht]
\centering
\includegraphics[width=1\textwidth]{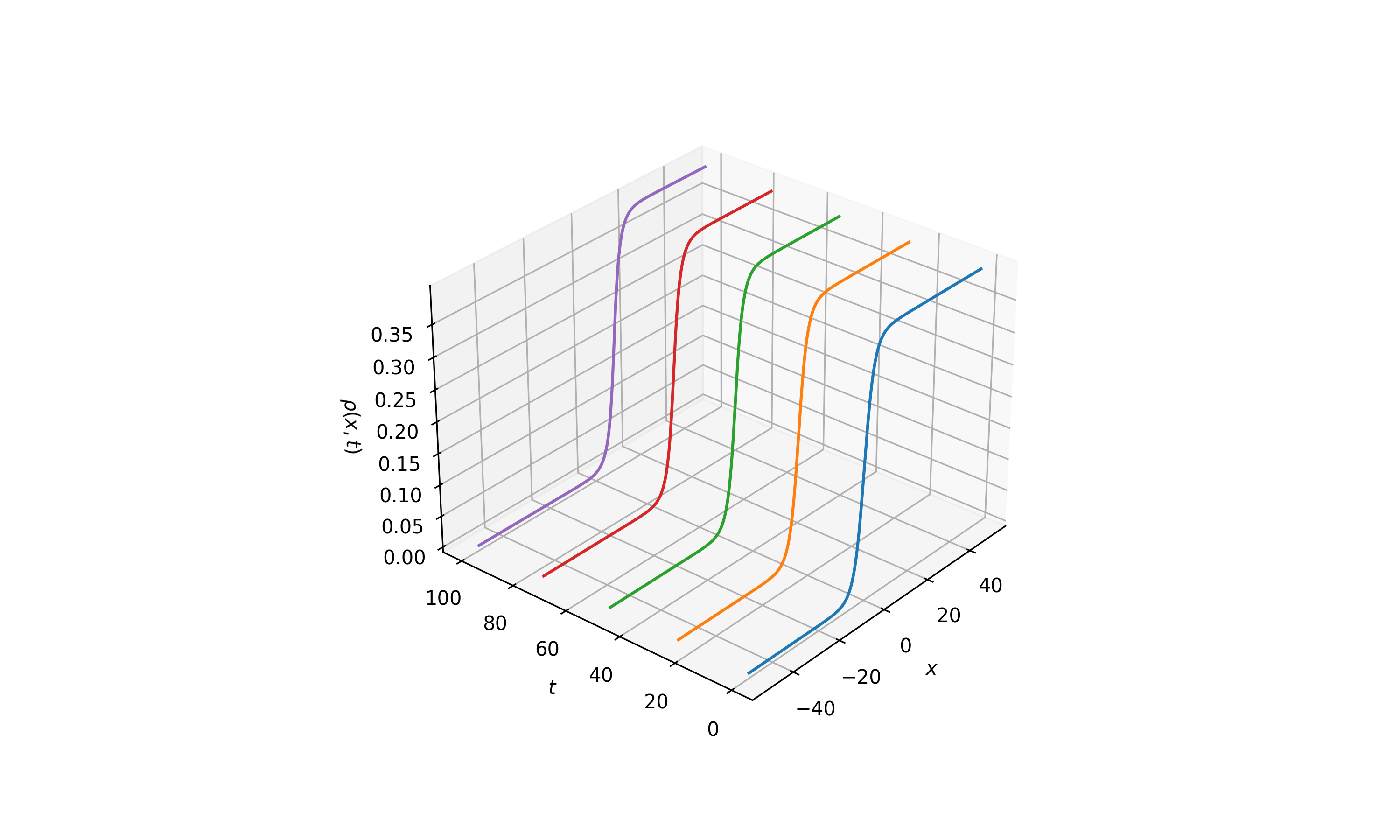}
\caption{Spacetime diagram of a 1-shock wave. 
The axes represent space \( x \), time \( t \), and density \( \rho(x, t) \), with the plot ranging from \( x = -50 \) to \( 50 \) and \( t = 0, 25,50,75,100\). The parameters are \( k=0.4 \), \( a=0 \), \( V_m=\rho_m=0.5 \), \( D=1 \), \( \delta = 5 \).
}\label{fig:1-shock-SB}
\end{figure}

This solution represents a 1-shock wave propagating at a steady velocity, which can be interpreted as modeling the movement of vehicles in their direction of travel.
Fig. \ref{fig:1-shock-SB} shows an example of this solution.

Now, consider the solution \(\rho(x,t)\) which is derived from \(f(z,t) = 1 + \sum_{j=1}^N \e^{k_jz + \omega_j t + a_j}\) in the general case of arbitrary \(N\). 
This solution represents the transition of \(N\) shock waves into a single shock wave, which is demonstrated as follows.
By transforming to a coordinate system moving at a speed \(-V_c\), it is deduced from the conditions \( \omega_j < 0 \) for \( j=1,2,...,N-1 \) and \( \omega_N = 0 \), leading to the limit 
\begin{align}
    \lim_{t \rightarrow \infty} \rho_x(x,t) = \frac{D}{A}\frac{k_N \sin{k_N\delta}}{\cosh{(k_N x + a_N)} + \cos{k_N\delta}}.
\end{align}
Thus it is concluded that the solution converges to a single shock wave moving in the direction of travel at the speed of \(V_c\). 

The interaction between free-flow and congested traffic states leads to the absorption of free-flow into congested traffic, resulting in the expansion of congestion. 
Confirm this phenomenon using density and flux.
Regarding equation \eqref{eq:traffic_coth} as the equation of continuity \(\rho_t + J_x = 0\) provides the flux
\begin{align}
    J(x,t) = -\frac{D}{A} \int_{-\infty}^x \frac{\Re f(x+\ii \delta,t) \pdv{\Im f(x+\ii \delta,t)}{t}-\Im f(x+\ii \delta,t) \pdv{\Re f(x+\ii \delta,t)}{t}}{(\Re f(x+\ii \delta,t))^2 + (\Im f(x+\ii \delta,t))^2}dx.
\end{align}

Figure \ref{fig:4-shock-SB} represents the behavior of \(N\)-shock waves, illustrating that low-density waves, considered as free-flow state, are absorbed into a congested state after colliding with high-density waves, perceived as congested state. 
\begin{figure}[bht]
\centering
\includegraphics[width=1\textwidth]{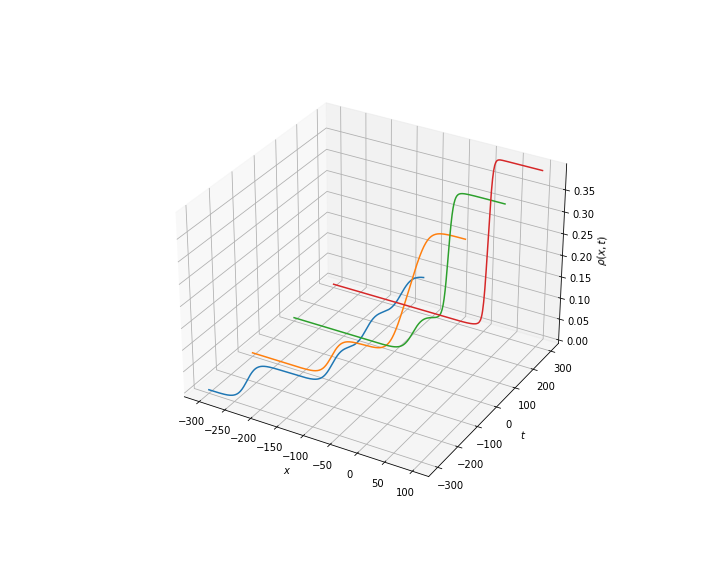}
\caption{Spacetime diagram of a \(N\)-shock wave (\(N=4\)). 
The axes represent space \( x \), time \( t \), and density \( \rho(x, t) \), with the plot ranging from \( x = -300 \) to \( 100 \) and \( t = -300, -100,100,300\).
The parameters are \( k_1=0.1,k_2=0.2,k_3=0.3,k_4=0.4 \), \( a_1=a_2=a_3=10,a_4=20 \), \( V_m=\rho_m=0.5 \), \( D=1 \), \( \delta = 0.01 \).
}\label{fig:4-shock-SB}
\end{figure}

The flux corresponding to Fig. \ref{fig:4-shock-SB} is shown in Fig.\ref{fig:4-shock-JSB}. 
At \(t = -300, -100\), it is observed that in low-density regions, the flux increases with density, indicating the presence of a free-flow state. 
However, at \(t = 100, 300\), a decrease in flux is observed, indicating that the free-flow state is changing into a congested state.

Furthermore, under the condition \( \rho_m = \rho_\infty \), since \(V_c=0\), the shock wave does not travel after its formation. This can be considered as a description of the deadlock phenomenon, where the road capacity ahead is full, and all cars stop.

\begin{figure}[bht]
\centering
\includegraphics[width=1\textwidth]{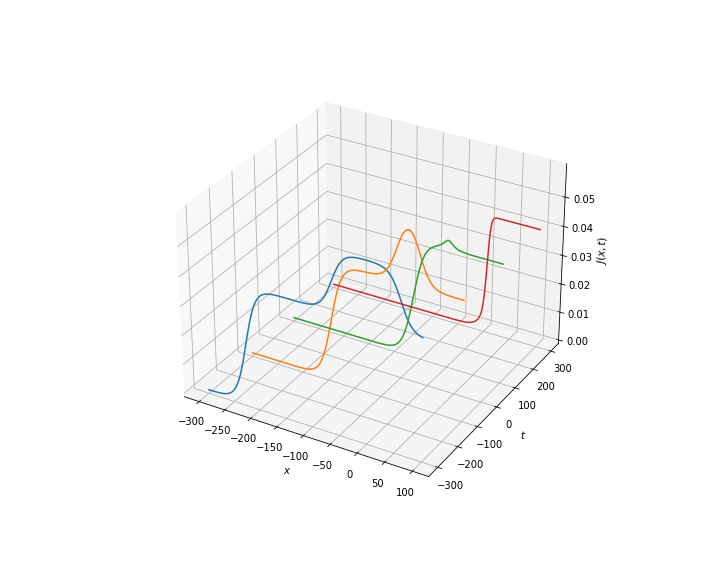}
\caption{Spacetime diagram of the flux corresponding to \(N\)-shock wave (\(N=4\)). 
The axes represent space \( x \), time \( t \), and density \( J(x, t) \), with the plot ranging from \( x = -300 \) to \( 100 \) and \( t = -300, -100,100,300\).
The parameters are \( k_1=0.1,k_2=0.2,k_3=0.3,k_4=0.4 \), \( a_1=a_2=a_3=10,a_4=20 \), \( V_m=\rho_m=0.5 \), \( D=1 \), \( \delta = 0.01 \).
These are the same settings as in Fig. \ref{fig:4-shock-JSB}.
}\label{fig:4-shock-JSB}
\end{figure}

Next, investigate the dependence on the parameter \(\delta\).
As pointed out by Satsuma and Tomoeda in \cite{Satsuma-Tomoeda}, the integral kernel corresponding to the Burgers equation is the sign function, suggesting that the driver's perception distance is considered to be equivalent to infinite. 
In contrast, the generalised equation \eqref{eq:traffic_coth} with the \(\coth\) function as the integral kernel implies that the driver's perception distance is effectively finite, and a larger \(\delta\) leads to a shorter perception distance. 

Figure~\ref{fig:4shock-dep-delta} plots the dependence of density on \(\delta\).
During the collision, the gradient of the shock wave front increases for a larger \(\delta\).
It is seen from the fact that the maximum value of \(\rho_x \) of a single shock wave is proportional to \(\tan\frac{k_N\delta}{2}\).

This means that due to the narrow perception distance, vehicles merge into the congestion without slowing down, leading to a rapid deceleration. 
Therefore, we can understand the reason why the gradient of the wave front increases.

\begin{figure}[bht]
\centering
\begin{minipage}{0.5\hsize}
\begin{center}
\includegraphics[width=1.\textwidth]{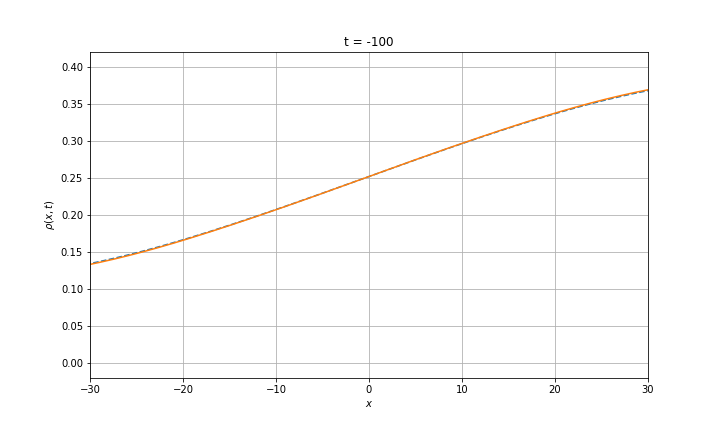}
\hspace{1.0cm}(a) $t = -100$
\end{center}
\end{minipage}%
\begin{minipage}{0.5\hsize}
\begin{center}
\includegraphics[width=1.\textwidth]{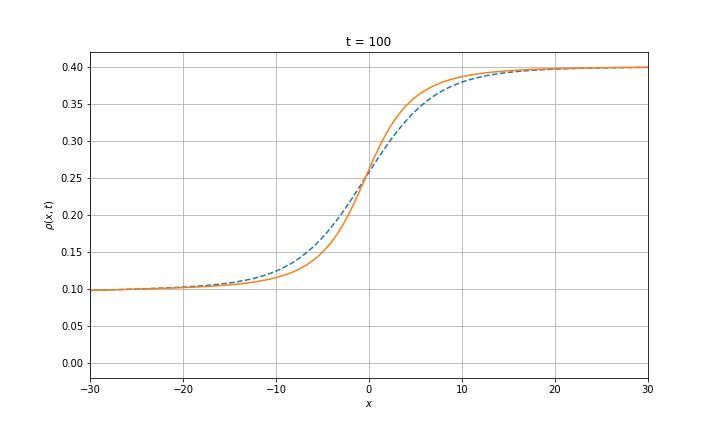}
\hspace{1.0cm}(b) $t = 100$)
\end{center}%
\end{minipage}
\begin{minipage}{0.5\hsize}
\begin{center}
\includegraphics[width=1.\textwidth]{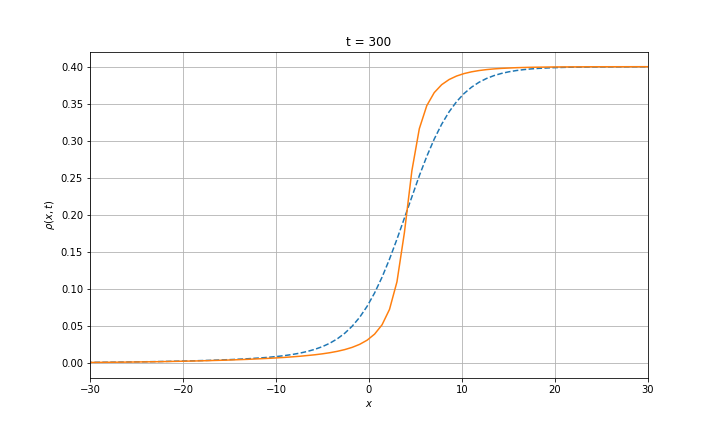}
\hspace{1.0cm}(c) $t = 300$
\end{center}%
\end{minipage}
\caption{These figures are snapshots of the density \(\rho(x,t)\) with a fixed \(t\). 
The horizontal axis represents \(x\), and the vertical axis is \(\rho(x,t)\). 
The parameters, other than \(\delta\), are the same as in Fig \ref{fig:4-shock-SB}, but \(x\) is limited to the range from \(-30\) to \(30\). 
The dashed line represents \(\delta = 0.01\), and the solid line represents \(\delta = 7\). 
Regarding time, (a) is at \(t = -100\), (b) is at \(t = 100\), and (c) is at \(t = 300\).
}
\label{fig:4shock-dep-delta}.
\end{figure}

\subsection{Traffic flow model with elliptic singularity}
%
%
Finally we discuss a generalised traffic flow model with singular integral of a Jacobi elliptic function.
As in the previous subsection, we consider a  dimensionless equation
\begin{subequations}
    \begin{align}
   v_t&=-\left[2v T_k(v)\right]_x-V_cv_x +Dv_{2x}
    \label{eq:ellip_sing_v} \\
   T_k(v)&:=-\frac{1}{\pi}\dashint_{x-K}^{x+K} \ns(x-y,k)v(y)\,dy
\end{align}
\end{subequations}
The maximum velocity $V_m$, in general, could change according to the circumstance, so we allow it to depend on time $t$ and position $x$, so that $V_c=V_c(x,t)$ 

Since 
\begin{align*}
    \lim_{k \to 1-0}\ns(x,k)=\coth(x),
\end{align*}
 \eqref{eq:ellip_sing_v} is a generalisation of \eqref{eq:traffic_coth},
and here we treat another limit $k \to +0$.
In this limit, it holds that
\begin{align*}
    \lim_{k \to +0}\ns(x,k)= \frac{1}{\sin x},\quad
     \lim_{k \to +0}K=\frac{\pi}{2},\quad \lim_{k \to 0}K'=+\infty.
\end{align*}
%
%
\begin{prop}\label{lem:KtoInfty}
If a function $f(x+\ii \eta)$ ($x,\,\eta \in \bR$) does not depend on modulus \(k\) and converges in the limit $\eta \to +\infty$ as
\begin{align*}
    \lim_{\eta \to +\infty} f(x+\ii \eta)=f_\infty(x),
\end{align*}
then,
\begin{align}
&\lim_{k \to +0}\frac{1}{\pi}\int_0^{K'}\frac{k}{\dn(\eta,k')}f(x+\ii \eta)\, d\eta =\frac{2}{\pi}f_\infty(x)\label{eqe:Iv_infty}
\end{align}
\end{prop}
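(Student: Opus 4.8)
The plan is to absorb the awkward weight $k/\dn(\eta,k')$ into a single standard Jacobi function by means of a quarter-period shift, after which the integral becomes an average of $f$ against a weight of fixed total mass that concentrates at large imaginary part; the limit then reads off directly from the hypothesis $f(x+\ii\eta)\to f_\infty(x)$.

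First I would record two facts about the modulus $k'$, whose complementary modulus is $\sqrt{1-k'^2}=k$: the special value $\dn(K',k')=k$, and the quarter-period shift identity $\dn(K'-\eta,k')=k/\dn(\eta,k')$ (a standard Jacobi relation, of the same type as the transformation $\sn(-K-\ii\eta,k)=-1/\dn(\eta,k')$ already used in the preceding Remark). The right-hand side of this identity is exactly the reciprocal weight in the integrand, so substituting $s=K'-\eta$ turns the integral into
\begin{align*}
\frac{1}{\pi}\int_0^{K'}\frac{k}{\dn(\eta,k')}\,f(x+\ii\eta)\,d\eta
=\frac{1}{\pi}\int_0^{K'}\dn(s,k')\,f\big(x+\ii(K'-s)\big)\,ds .
\end{align*}
Now the weight is simply $\dn(s,k')\,ds$, and since $\tfrac{d}{ds}\operatorname{am}(s,k')=\dn(s,k')$ with $\operatorname{am}(K',k')=\tfrac{\pi}{2}$, its total mass is exactly
\begin{align*}
\int_0^{K'}\dn(s,k')\,ds=\operatorname{am}(K',k')=\frac{\pi}{2}
\end{align*}
for every $k$.

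The remaining step is to locate this mass. I would split $[0,K']=[0,K'-L]\cup[K'-L,K']$ for a fixed large $L$. On $[K'-L,K']$ the same shift identity rewrites the mass as $\int_0^{L}k/\dn(r,k')\,dr$, which tends to $0$ as $k\to0$ since $\dn(r,k')\to\operatorname{sech} r$ on $[0,L]$ while the prefactor $k$ vanishes; as $f$ is bounded on the compact window $\{x+\ii\tau:0\le\tau\le L\}$, this piece contributes nothing in the limit. On the bulk $[0,K'-L]$ one has $K'-s\ge L$, so the convergence hypothesis gives $|f(x+\ii(K'-s))-f_\infty(x)|<\varepsilon$, and the bulk mass tends to $\tfrac{\pi}{2}$; hence the bulk piece tends to $\tfrac{1}{\pi}\cdot\tfrac{\pi}{2}\cdot f_\infty(x)$ up to $O(\varepsilon)$. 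Letting $L\to\infty$ (i.e. $\varepsilon\to0$) yields the value $\tfrac12 f_\infty(x)$.

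The main obstacle is precisely the interchange of limit and integral: the domain $[0,K']$ grows to $[0,\infty)$ while $\dn(\cdot,k')\to\operatorname{sech}$ only pointwise, so naive dominated convergence fails because the weight admits no fixed integrable majorant. The exact mass identity $\int_0^{K'}\dn=\tfrac\pi2$ together with the mass-escape estimate above is what replaces it, and is the step I would write out most carefully; notably it uses only the stated pointwise convergence of $f$ and its boundedness on bounded imaginary windows. I should flag that this argument produces the constant $\tfrac12$ rather than $\tfrac{2}{\pi}$—indeed for $f\equiv1$ the left-hand side equals $\tfrac12$ for every $k$—so the constant in the stated identity should be rechecked.
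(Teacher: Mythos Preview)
Your argument is correct and is cleaner than the paper's. The paper rescales $\eta=K'y$ to $[0,1]$ and then invokes the pointwise asymptotics $1/\dn(K'y,k')=\cosh(K'y)\,(1+o(k))$ and $K'=\log(4/k)+o(k)$ to reduce to an explicit exponential integral. You instead use the exact quarter-period identity $\dn(K'-\eta,k')=k/\dn(\eta,k')$ to rewrite the weight as $\dn(s,k')\,ds$, whose total mass is exactly $\operatorname{am}(K',k')=\pi/2$ for every $k$, and then run a bulk/tail argument. Your route avoids all asymptotic bookkeeping and makes the interchange of limit and integral transparent.

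More importantly, your check with $f\equiv 1$ is decisive: by the amplitude identity the left-hand side equals $\tfrac{1}{\pi}\cdot\tfrac{\pi}{2}=\tfrac12$ for every $k$, so the correct constant in the proposition is $\tfrac12$, not $\tfrac{2}{\pi}$. The paper's proof slips precisely at the asymptotic $1/\dn(K'y,k')=\cosh(K'y)(1+o(k))$: this $o(k)$ is not uniform in $y$. Indeed at $y=1$ one has $1/\dn(K',k')=1/k$ exactly, whereas $\cosh(K')\sim \tfrac12 e^{K'}\sim 2/k$, a factor-of-two discrepancy that sits exactly where the mass concentrates. Replacing $\cosh(K'y)$ by the correct asymptotic (equivalently, by $\dn(K'-K'y,k')/k\to\operatorname{sech}(K'(1-y))/k$) in the paper's computation also yields $\tfrac12$. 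So your flag on the constant is justified; the statement as printed should read $\tfrac12 f_\infty(x)$.

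One small point to tighten in your write-up: in the tail estimate you invoke boundedness of $f$ on $\{x+\ii\tau:0\le\tau\le L\}$. This is not part of the stated hypothesis, so you should either add it as an assumption (the paper's proof implicitly needs the same) or note that it follows from continuity of $f$ on the closed strip.
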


\begin{proof}
\begin{align*}
C&:=\lim_{k \to +0}\frac{1}{\pi}\int_0^{K'}\frac{k}{\dn(\eta,k')}f(x+\ii \eta)\, d\eta\\
&=\lim_{k \to +0}\frac{1}{\pi}\int_0^{1}\frac{k K'}{\dn(K'y,k')}f(x+\ii K'y)\, dy. 
\end{align*}
Using asymptotic expansion, we notice
\begin{align*}
    \frac{1}{\dn(K'y,k')}&=(\cosh{K'y})\left(1+o(k)\right),\\
    K'&=\log\frac{4}{k} + o(k).
\end{align*}
Hence 
\begin{align*}
 C&=\lim_{k \to +0}\frac{1}{\pi}\int_0^1 -k \log(\frac{k}{4}) \frac{1}{2}\e^{-(\log \frac{k}{4})y} f(x-\ii(\log \frac{k}{4})y)\, dy\\
 &=\lim_{k \to +0}\frac{1}{\pi}\int_0^1 -k \log(\frac{k}{4}) \frac{1}{2}\e^{-(\log \frac{k}{4})y}f_{\infty}(x) \, dy \\
 &\quad +\lim_{k \to +0}\frac{1}{\pi}\int_0^1 -k \log(\frac{k}{4}) \frac{1}{2}\e^{-(\log \frac{k}{4})y}\left\{f(x-\ii(\log \frac{k}{4})y)-f_{\infty}(x)\right\}\, dy.
\end{align*}
The first term converges to $\frac{2}{\pi}f_\infty(x)$ and the second term converges to $0$.
Thus, we obtain
\begin{align*}
 C=\frac{2}{\pi}f_\infty(x)   
\end{align*}
\end{proof}

Now we consider an integro-differential equation:
\begin{align}\label{eq:last_sin_eq}
v_t=\left[-2T_0(v(x,t))v(x,t)-V_c(x,t)v(x,t)+Dv_x(x,t) \right]_x   
\end{align}
where 
\begin{align*}
    T_0(v):=-\frac{1}{\pi}\int_{x-\frac{\pi}{2}}^{x+\frac{\pi}{2}}\frac{1}{\sin(x-y)}v(y,t)\, dy
\end{align*}
As an example in which an exact solution of this limit is obtained,  we consider the case where $V_c(x,t)$ ($t>0$) has periodic fluctuation in $x$ and converges to a constant $V_c(\infty)$. 
The following function satisfies this condition and can be treated analytically.
\begin{align}
   &V_c(x,t)=V_c(\infty)+\frac{D}{4\pi}\Re\left[\e^{\ii(x-V_c(\infty) t +\theta)-D t}\log \left(1+\e^{-2\ii(x-V_c(\infty) t +\theta)-2D t}\right) \right]
   \label{eq:sine_real_Vc}
\end{align}
%
%
 
If $U(z,t)$ is a holomorphic function of $z$ and satisfies $U(z^*,t)=U(z,t)^*$, then, putting $U(x+\ii K',t)=u(x,t)+\ii v(x,t)$, we have from corollary~\ref{cor:elliptic_relation}, 
\begin{align}
    &u(x,t)=-T_k(v(x,t))\nonumber \\
    &\quad +\frac{1}{\pi}\int_0^{K'}
\frac{k}{\dn(\eta,k')}\Re\left[U(x+K+\ii \eta)+U(x-K+\ii \eta)\right]\, d\eta \label{eq:Te_real_relation}
\end{align}
From \eqref{eq:Te_real_relation}, we notice that \eqref{eq:ellip_sing_v} is related to the following equation.
\begin{align}
    U(z,t)_t=\left[U(z,t)^2-\tilde{V_c}U(z,t)+DU_z(z,t)  \right]_z \label{eq:complex_Te}
\end{align}
Here, $x=\Re[z]$ and 
\begin{subequations}
\begin{align}
    \tilde{V}_c(z,t)&:=V_c(z,t)+2C(z,t) \label{elip_tilde} \\
    C(x,t)&:=\frac{1}{\pi}\int_0^{K'}
\frac{k}{\dn(\eta,k')}\Re\left[U(x+K+\ii \eta,t)+U(x-K+\ii \eta,t)\right]\, d\eta.
\end{align}
\end{subequations}
Our strategy is to determine $U(z,t)$ so that it gives $\tilde{V}_c(x,t)=V_c(\infty)$, that is, 
\begin{align}\label{eq:real_condition}
 &2C(x,t)=V_c(\infty)-V_c(x,t).   
\end{align}
By changing variable
\begin{align}
    U(z,t)=D\left(\log F(z,t)\right)_z,\label{eq:ellip_Fzt}
\end{align}
\eqref{eq:complex_Te} turns to 
\begin{align}
    F_t=DF_{2z}-V_c(\infty)F_z. \label{eq:elip_F_linear}
\end{align}
A general solution of \eqref{eq:elip_F_linear} is
\begin{align}
    F(z,t)=\int \mu(dk)\, \e^{\omega(k) t+kz}
\end{align}
where $\mu(dk)$ denotes arbitrary measure and $(\omega,k)$ satisfy the dispersion relation
\begin{align}
    \omega(k)=Dk^2-V_c(\infty)k. \label{eq:sine_disp}
\end{align}
As a simple case, we consider
\begin{align}
    F(z,t)=1+A(K')\e^{ \omega t}\cos(qz-\nu t + \theta) 
\end{align}
where $A(K')$ is a real constant depending on $K'$, and $q$ ($q>0$), $\omega$, $\nu$, $\theta$ are real constants which satisfy
\begin{align}
    \omega=-q^2D,\quad \nu=qV(\infty).
\end{align}
Hence
\begin{align*}
    U(z,t)=\dfrac{-D A(K')\e^{ \omega t}\sin(qz-\nu t + \theta) }{1+A(K')\e^{ \omega t}\cos(qz-\nu t + \theta) }.
\end{align*}
When we put $A(K')=2\e^{-K'}$ and $q=1$ for simplicity, then similarly to proposition~\ref{lem:KtoInfty},
\begin{align*}
C(x,t)&:=\lim_{k \to +0} \frac{1}{\pi}\int_0^{K'}
\frac{k}{\dn(\eta,k')}\Re\left[U(x+K+\ii \eta,t)+U(x-K+\ii \eta,t)\right]\, d\eta\\
&=\lim_{k \to +0}\frac{1}{\pi}\int_0^{K'}4\e^{-K'}\cosh \eta
\Re\left[U(x+\frac{\pi}{2}+\ii \eta,t)+U(x-\frac{\pi}{2}+\ii \eta,t)\right]\, d\eta.
\end{align*}
Here
\begin{align*}
U(x\pm\frac{\pi}{2}+\ii \eta,t)&=\dfrac{-2D\e^{-K'}\e^{\omega t}(\sin \zeta_\pm \cosh \eta+\ii \cos \zeta_\pm \sinh \eta)}{1+2\e^{-K'}\e^{\omega t}(\cos \zeta_\pm\cosh \eta-\ii\sin \zeta_\pm\sinh \eta)}   \\
&\sim \dfrac{-\ii D\e^{\omega t-K'+\eta-\ii \zeta_\pm}}{1+\e^{\omega t-K'+\eta-\ii\zeta_\pm}}\quad (k \to +0),
\end{align*}
where $\zeta_\pm:=\zeta\pm\frac{\pi}{2}$, and $\zeta:= x-\nu t +\theta$.
Then,
\begin{align*}
C(x,t)&=\lim_{K' \to \infty}\frac{2}{\pi}\int_0^{K'}\e^{-K'+\eta}\Re\left[ \dfrac{-\ii D\e^{\omega t-K'+\eta-\ii \zeta_+}}{1+\e^{\omega t-K'+\eta-\ii\zeta_+}}+\dfrac{-\ii D\e^{\omega t-K'+\eta-\ii \zeta_-}}{1+\e^{\omega t-K'+\eta-\ii\zeta_-}}\right]\,d\eta\\
&=\frac{2}{\pi}\int_0^1\Re\left[(-\ii D)\{\frac{(\e^{-\ii\zeta_+})s}{\e^{-\omega t}+(\e^{-\ii\zeta_+})s}  +\frac{(\e^{-\ii\zeta_-})s}{\e^{-\omega t}+(\e^{-\ii\zeta_-})s} \}\right]\,ds\\
&=\frac{2}{\pi}\int_0^1\Re\left[(-\ii D)\{ \frac{-\ii(\e^{-\ii\zeta})s}{\e^{-\omega t}-\ii(\e^{-\ii\zeta})s}  +\frac{\ii (\e^{-\ii\zeta})s}{\e^{-\omega t}+\ii (\e^{-\ii\zeta})s} \}\right]\,ds\\
&=\frac{2D}{\pi}\int_0^1\Re\left[-\frac{(\e^{-\ii\zeta})s}{\e^{-\omega t}-\ii(\e^{-\ii\zeta})s} +\frac{(\e^{-\ii\zeta})s}{\e^{-\omega t}+\ii (\e^{-\ii\zeta})s} \right]\,ds\\
&=\frac{2D}{\pi}\Re\left[\e^{\ii \zeta+\omega t}\log \left(1+\e^{-2\ii \zeta+2\omega t}\right) \right]
\end{align*}

From corollary~\ref{cor:elliptic_relation}, we find that
\begin{align*}
   &\lim_{K' \to \infty}\Re[U(x+\ii K',t)]
    =-\lim_{K' \to \infty}T_0(\Im[U(x+\ii K',t)])+C(x,t) 
\end{align*}
Since $U(x,t)$ satisfies \eqref{eq:complex_Te} for $\tilde{V}_c=V_c(\infty)$, its imaginary part $v(x,t):=\lim_{K' \to \infty}T_0(\Im[U(x+\ii K',t)])$ satisfies
\begin{align*}
    v_t(x,t)&=\left[2\lim_{K' \to \infty}\Re[U(x+\ii K',t)]-V_c(\infty)v(x,t)+Dv_x(x,t) \right]_x\\
    &=\left[-2T_0(v(x,t))v(x,t)+2C(x,t)v(x,t)-V_c(\infty)v(x,t)+Dv_x(x,t) \right]_x\\
    &=\left[-2T_0(v(x,t))v(x,t)-V_c(x,t)v(x,t)+Dv_x(x,t) \right]_x
\end{align*}
Similar estimation for $\lim_{K' \to \infty}\Im[U(x+\ii K',t)]$ gives
\begin{align}\label{eq:ellip_sing_solex}
    v(x,t)&=\Im\left[ \frac{-\ii D\e^{-\ii \zeta}}{\e^{-\omega t}+\e^{-\ii\zeta}}\right]\nonumber\\
    &=-\frac{D}{2}\frac{\e^{\omega t}+\cos(x-\nu t+\theta)}{\cosh \omega t + \cos (x-\nu t+\theta)}.
\end{align}
Therefore, we find that \eqref{eq:ellip_sing_solex} is an exact solution of \eqref{eq:last_sin_eq}.

\section{Concluding Remarks}
In this article we consider a traffic model which is given by an integro-differential equation  \eqref{eq:traffic_coth}.
The model is a non-local extension of the celebrated Burgers equation.
We showed that generally, nonlinear partial differential equations with singular integrals can be constructed by considering the boundary values of holomorphic function in a complex domain, based on the residue theorem. 
Several examples of integrable integro-differential equations including double singular integrals and elliptic singularity kernel have been provided.
Then, we discussed the traffic model with a singular integral and provided \(N\)-shock wave solutions and fluxes for the extended version of the Burgers equation with singular integrals and discussed the phenomenon of deadlock and the impacts of long-distance correlations.

Although we mainly investigated holomorphic solutions of the equation, a solution with singularity of poles would present some interesting features such as a blowing-up phenomenon. Investigation of singularity as well as the detailed examination of the solutions of the traffic flow model with elliptic kernel are the problems we would like to address in future.

\section*{acknowledgement}
We would like to thank Professors Chihiro Matsui, Junkichi Satsuma, Tetsuji Tokihiro, and  Ralph Willox for their encouragement and useful discussions and comments.
This work is supported in part by JST SPRING, Grant Number JPMJSP2108.

\clearpage
\bibliography{ref}   

\bibliographystyle{unsrt}  






\end{document}